\newif\ifdraft\drafttrue
\definecolor{cornellred}{RGB}{196,18,48}
\definecolor{dartmouthgreen}{RGB}{0,112,60}
\definecolor{mcgillred}{RGB}{237,27,47}
\definecolor{uclpurple}{RGB}{96,40,153}
\definecolor{ucllightblue}{RGB}{102,204,255}
\newcommand\sem[1]{\llbracket#1\rrbracket}
\newcommand\amp{\mathrel\&}
\DeclareMathOperator*{\bigamp}{\scalerel*{\amp}{\textstyle\sum}} 
\newcommand\RAlg{R\hyphen\mathsf{Alg}}
\newcommand\eval{\mathsf{eval}}
\newcommand\beps{\mathbold{\eps}}
\newcommand\NS{\naturals^{\Sigma^*}}
\newcommand\Ns{\naturals^S}
\newcommand\Nt{\naturals^T}
\newcommand\NA{\naturals^A}
\newcommand\DNS{\DD(\NS)}
\newcommand\SNS{\SS(\NS)}
\newcommand\SNs{\SS(\Ns)}
\newcommand\SNt{\SS(\Nt)}
\newcommand\rest\restriction
\newcommand\dirac[1]{\delta_{#1}}
\newcommand\mset[1]{\{\kern-2pt|#1|\kern-2pt\}}
\newcommand\flatten{\flat}
\newcommand\struct\theta
\renewcommand\next[1]{\mathsf{next}\kern.5pt#1}
\newcommand\monom{\mathsf{monom}}
\newcommand\RRAlg{\reals\hyphen\mathsf{Alg}}
\newcommand\norm[1]{\|{#1}\|}
\renewcommand\angle[1]{\langle#1\rangle}
\newcommand\bone{\mathbold{1}}
\newcommand\SR{(\Sigma^*,\reals)}
\newcommand\NX{\naturals^X}
\newcommand\DNX{\DD(\naturals^X)}
\newcommand\MRAlg{(M,R)\mathsf{\hyphen Alg}}
\newcommand\SRAlg{(\Sigma^*,\reals)\mathsf{\hyphen Alg}}
\newcommand\free{FS}
\newcommand\SeSS{\Sigma^*\cd\beps\cup\Sigma^*\cd S}
\newcommand\SeSSc{\Sigma^*\cd\beps,\Sigma^*\cd S}
\newcommand\dean{d_{\eps,a}^{\beps^n}}
\newcommand*\cd{\mathpalette\bigcdot@{.6}}
\newcommand*\bigcdot@[2]{\mathbin{\vcenter{\hbox{\scalebox{#2}{$\m@th#1\bullet$}}}}}
\begin{document}

\title{A Decision Procedure for Probabilistic Kleene Algebra with Angelic Nondeterminism}
\author{Shawn Ong\\Cornell University \and Dexter Kozen\\Cornell University}
\date{}
\maketitle

\section{Introduction}
\label{sec:intro}

In \cite{OMK25a}, a new finite automaton model with probabilistic and nondeterministic transitions was introduced, along with a corresponding regular expression language. An interesting aspect of these formalisms, and the main feature differentiating them from previous approaches, is that automata and expressions are interpreted over $\DNS$, the space of probability distributions over multisets of strings with finite multiplicities. Thus strings are accepted not just with some probability, but with some finite multiplicity with some probability. The development relied on a recently established distributive law between multisets and distributions \cite{Jacobs21,DashStaton21a,DashStaton21b,Dash23}, thereby circumventing difficulties arising from the known lack of a distributive law between powersets and distributions \cite{VaraccaWinskel06,Zwart20,ZwartMarsden22} and  of any monad structure for the composite functor of distributions over powersets \cite{DahlqvistNeves18}.

The main result of \cite{OMK25a} was a full Kleene theorem showing that the two formalisms are semantically equivalent. This theorem was the first of its type among a long history of automata models incorporating probability and nondeterminism; see \cite{OMK25a} and references therein.

The main result of the present paper is that semantic equivalence of two given automata or expressions is decidable. The algorithm is inspired by the decision procedure for weighted automata as presented in \cite{Kiefer20}; see also \cite{Mohri09}. With weighted automata, the treatment depends heavily on linear algebra. The appropriate mathematical structures are finite-dimensional vector spaces over $\reals$. For a weighted automaton with finite state set $S$, the algorithm relies on a dimension argument in the vector space $\reals^S$. For us, the situation is similar, except linear algebra is replaced by polynomial algebra. The appropriate mathematical constructs are unital commutative associative algebras over $\reals$ ($\reals$-algebras). Termination for an automaton with finite state set $S$ depends on the fact that the polynomial algebra $\reals[S]$ is Noetherian. The algorithm makes use of Gr\"obner basis techniques \cite{Buchberger76,Lazard83} for ideal membership testing.

Much of the development of this paper involves a reformulation of the results of \cite{OMK25a} in terms of $\reals$-algebras. We show that the space $\SNS$ of signed measures is an $\reals$-algebra with multiplication interpreted as nondeterministic choice $\amp$. The multiplicative unit is $\dirac{\mset{}}$, the Dirac (point mass) measure on the empty multiset. The space $\SNs$ for a finite state set $S$ is also an $\reals$-algebra, and we show that it is isomorphic to the polynomial algebra $\reals[S]$.

The semantics of sequential composition developed in \cite{OMK25a} was formulated as a bind operation defined in terms of an \emph{injective monoid action} of $\Sigma^*$ on the space of measures $\DNS$. This suggests an extension of the notion of $\reals$-algebra to include monoid actions. For a monoid $M$ and a unital commutative ring $R$, we define an \emph{$(M,R)$-algebra} to be an $R$-algebra $V$ equipped with an operation $(\cd):M\times V\to V$ and a distinguished element $\beps$ such that
\begin{itemize}
\item
$(\cd)$ is a monoid action, that is, $x\cd(y\cd u) = xy\cd u$ and $1\cd u=u$,
\item
each $x\cd-$ is an $R$-algebra homomorphism, that is, $x\cd(ru+sv) = r(x\cd u)+s(x\cd v)$, $x\cd(uv) = (x\cd u)(x\cd v)$, $x\cd 1=1$, and $x\cd 0=0$. 
\end{itemize}
These objects form a category whose morphisms are $R$-algebra morphisms $f$ such that $f(x\cd u)=x\cd f(u)$ and $f(\beps)=\beps$.
We spend some effort developing the basic theory of these structures to streamline the treatment of our application of interest. For our application, we work with two particular $(\Sigma^*,\reals)$-algebras, namely $\SNS$ with the monoid action $(\cd)$ as defined in \cite{OMK25a} and the free $(\Sigma^*,\reals)$-algebra $\reals[\Sigma\cd\beps,\Sigma^*\cd S]$ on generators $S$.

\section{$R$-algebras}
\label{sec:RAlg}

\subsection{Basic Properties}

Let $R$ be a unital commutative ring. The multivariate polynomial ring $R[S]$ is the free unital commutative associative algebra over $R$ ($R$-algebra) on generators $S$. The free construction $R[-]$, being left adjoint to a forgetful functor, gives rise to a monad over \Set. The monad multiplication $\flatten_S:R[R[S]]\to R[S]$ regards a polynomial of polynomials over $S$ as a polynomial over $S$. If we like, we can multiply out the nested polynomial expressions and combine like terms. In general, any polynomial expression can be reduced to a canonical representation as an $R$-weighted sum of distinct monomials in this way. The monad unit maps $s$ as an indeterminate to $s$ as a monomial. The category $\RAlg$ of unital commutative associative algebras over $R$ is the category of Eilenberg-Moore algebras of this monad.

The class $\RAlg$ also has various equivalent equational presentations:
\begin{itemize}
\item
$R$-modules with a unital commutative bilinear multiplication;
\item
unital commutative rings with a ring action $R\times A\to A$ as scalar multiplication;
\item
unital commutative rings with a unital commutative ring homomorphism $R\to A$.
\end{itemize}
The relation between the second and third of these characterizations is as follows. Given a scalar multiplication $(\cd):R\times A\to A$, the corresponding ring homomorphism $R\to A$ is $r\mapsto r1_A$; conversely, given a ring homomorphism $h:R\to A$, the corresponding scalar multiplication $R\times A\to A$ is $(r,a) \mapsto h(r)a$.

Let $(A,\sigma)$ be an $R$-algebra regarded as an Eilenberg-Moore algebra for the monad $R[-]$, where $\sigma:R[A]\to A$ is the structure map.
Any interpretation $f:S\to A$ of the indeterminates extends uniquely to an evaluation morphism $f^\dagger = \sigma\circ R[f] : R[S]\to A$, where $R[f]:R[S]\to R[A]$ is the morphism that substitutes $f(s)$ for $s$.

\subsection{Ideals}

Let $(A,\sigma)$ be an $R$-algebra. An \emph{ideal} in $A$ is a subset $I\subs A$ such that
\begin{itemize}
\item
$0\in I$
\item
if $x,y\in I$, then $x+y\in I$
\item
if $x\in A$ and $y\in I$, then $xy\in I$.
\item
if $r\in R$ and $y\in I$, then $ry\in I$.
\end{itemize}
The \emph{kernel} of an $\RAlg$ homomorphism $h:A\to B$, denoted $\ker h$, is the set $\set{a\in A}{h(a)=0}$. It is easily shown that the kernel of any $\RAlg$ homomorphism is an ideal. Conversely, any ideal in $A$ is the kernel of an epimorphism obtained via a quotient construction.

\subsection{Signed Measures on Multisets as $\reals$-algebras}

Let $\Sigma$ be a finite alphabet. Let $\Sigma^*$ denote the set of finite-length strings over $\Sigma$ and $\eps$ the null string. The family of multisets of $\Sigma^*$ with finite multiplicities is $\NS$, the family of functions $\Sigma^*\to\naturals$. Multiset union is just pointwise addition and is denoted $\uplus$. Multiset comprehension is denoted with stylized braces $\mset{{-}}$.

A \emph{finite tree} is a finite subset of $\Sigma^*$ closed under the prefix relation. For any finite tree $A$, let $\del A$ be the set of prefix-minimal elements of $\Sigma^*\setminus A$. For example, if $A = \{\eps,a,b,aa,ab\}$, then $\del A = \{ba,bb,aaa,aab,aba,abb\}$. Also, $\del\emptyset = \{\eps\}$, $\del\{\eps\} = \Sigma$, and $\del\Sigma^{<n} = \Sigma^n$, where $\Sigma^{<n} = \set{x\in\Sigma^*}{\len x<n}$ and $\Sigma^{n} = \set{x\in\Sigma^*}{\len x=n}$.

As argued in \cite{OMK25a}, $\NS$ is a standard Borel space with measurable sets generated by the sets $[\alpha]_A = \set{\beta\in\NS}{\beta\equiv_A\alpha}$, where $A\subs\Sigma^*$ is a finite tree, $\alpha\in\NS$, and $\beta\equiv_A\alpha$ if $\beta(x)=\alpha(x)$ for all $x\in A$. The relation $\equiv_n$ defined in \cite{OMK25a} is $\equiv_{\Sigma^n}$ as defined here. Each $[\alpha]_A$ contains a unique canonical element $\alpha\rest A$ that agrees with $\alpha$ on $A$ and vanishes outside of $A$. Since the support of $\alpha\rest A$ is contained in $A$, we can view $\alpha\rest A$ as an element of $\NA$.

The space $\DNS$ of probability measures on $\NS$ is the space of behaviors over which the automata and expressions of \cite{OMK25a} are interpreted. Let $\SNS$ denote the space of finite signed measures on $\NS$. Every element of $\SNS$ can be represented as a weighted difference $a\mu-b\nu$ of probability measures $\mu,\nu\in\DNS$.

\begin{lemma}
\label{lem:RAlgSigma}
The structure $(\SNS,+,-,0,\amp,\dirac{\mset{}})$ is an $\reals$-algebra.
\end{lemma}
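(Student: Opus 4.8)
The plan is to verify the $\reals$-algebra axioms directly, lifting the commutative monoid $(\NS,\uplus,\mset{})$ of multiset union to the level of signed measures. The underlying $\reals$-module structure on $\SNS$ requires nothing special: the finite signed measures on any measurable space form a real vector space under pointwise addition and scaling, with $0$ the zero measure and $-\mu$ the additive inverse; under the ``ring homomorphism'' presentation of an $\reals$-algebra from Section~\ref{sec:RAlg}, the map $\reals\to\SNS$ sending $r$ to the signed measure $r\cd\dirac{\mset{}}$ will serve once $\amp$ is in place. So the real content is that $\amp$ is a well-defined, $\reals$-bilinear, commutative, associative multiplication with unit $\dirac{\mset{}}$.

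First I would realize $\amp$ as a pushforward. On multisets, nondeterministic choice is multiset union $\uplus$, that is, pointwise addition of multiplicity functions, and I would check that $\uplus:\NS\times\NS\to\NS$ is Borel measurable: for a generating set $[\alpha]_A$ with $A$ a finite tree, $\uplus^{-1}([\alpha]_A)$ is the union, over the finitely many pairs in $\NA\times\NA$ that sum to $\alpha\rest A$, of the corresponding measurable rectangles $[\beta]_A\times[\gamma]_A$. For $\mu,\nu\in\SNS$ I would then set $\mu\amp\nu := \uplus_*(\mu\times\nu)$. Writing $\mu=\mu^+-\mu^-$ and $\nu=\nu^+-\nu^-$ for the Jordan decompositions, $\mu\times\nu$ is a well-defined signed measure (its total variation is $|\mu|\times|\nu|$, a finite measure); pushforward preserves finiteness; so $\mu\amp\nu\in\SNS$, and on probability measures this recovers the operation $\amp$ of \cite{OMK25a}. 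Bilinearity over $\reals$ is immediate since $(\mu,\nu)\mapsto\mu\times\nu$ is bilinear and $\uplus_*$ is linear.

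Next I would transport the monoid laws of $(\NS,\uplus,\mset{})$ through the pushforward. Commutativity: the coordinate swap on $\NS\times\NS$ carries $\mu\times\nu$ to $\nu\times\mu$ and is intertwined with $\uplus$ by symmetry of $\uplus$, so $\mu\amp\nu=\nu\amp\mu$. Unit: since $\mset{}\uplus\alpha=\alpha$, the composite $\NS\cong\{\mset{}\}\times\NS\xrightarrow{\ \uplus\ }\NS$ is the identity and sends $\dirac{\mset{}}\times\mu$ to $\mu$, so $\dirac{\mset{}}\amp\mu=\mu$. Associativity: I would argue that $(\mu\amp\nu)\amp\rho$ and $\mu\amp(\nu\amp\rho)$ both equal $F_*(\mu\times\nu\times\rho)$, where $F:\NS^3\to\NS$ is the ternary fold, using $F=\uplus\circ(\uplus\times\mathrm{id})=\uplus\circ(\mathrm{id}\times\uplus)$, functoriality of pushforward, the compatibility $f_*\sigma\times g_*\tau=(f\times g)_*(\sigma\times\tau)$, and associativity of the iterated product of finite signed measures.

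The step I expect to be the main obstacle is the measure-theoretic bookkeeping behind associativity: that iterated products of finite signed measures are associative and that pushforward commutes with products. These are Fubini-type facts, and although they are routine---finite signed measures are $\sigma$-finite, so Fubini--Tonelli applies term by term to the positive measures coming from the Jordan decompositions---they must be invoked with some care rather than taken for granted. Once they are in hand, the remainder is a mechanical check that the commutative-monoid identities of $(\NS,\uplus,\mset{})$ propagate along $\alpha\mapsto\dirac{\alpha}$ and extend by $\reals$-bilinearity to all of $\SNS$.
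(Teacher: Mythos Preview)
Your proposal is correct and follows essentially the same route as the paper: define $\amp$ as the pushforward of the product measure along $\uplus$ and inherit the commutative-monoid laws from $(\NS,\uplus,\mset{})$. The only noticeable difference is that the paper defers distributivity to a cited lemma on $\DNS$ (\cite[Lemma~B.5]{OMK24a}) and then extends by linearity, whereas you derive bilinearity directly from bilinearity of the signed product and linearity of pushforward; your version is more self-contained, and your attention to measurability of $\uplus$ and the Fubini bookkeeping fills in details the paper leaves implicit.
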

\begin{proof}
It is clearly a vector space over $\reals$ under the pointwise operations $+$, $-$, and $0$. Multiplication is nondeterministic choice $\amp$. This is defined by $\mu\amp\nu = (\mu\otimes\nu)\circ\uplus^{-1}$, the pushforward measure of the independent product $\mu\otimes\nu$ on $(\NS)^2$ under multiset union $\uplus$. Intuitively, to sample $\mu\amp\nu$, one samples $\mu$ and $\nu$ independently, then takes the multiset union of the outcomes. This operation is associative and commutative because the independent product operation $\otimes$ is associative and commutative on distributions and multiset union $\uplus$ is associative and commutative on multisets. The unit for $\amp$ is $\dirac{\mset{}}$, the Dirac (point mass) measure on the empty multiset, so the structure is a commutative monoid under $\amp$ and $\dirac{\mset{}}$. Finally, distributivity of $\amp$ over weighted addition in $\DNS$ (that is, $\mu\amp(r\nu+(1-r)\rho) = r(\mu\amp\nu)+(1-r)(\mu\amp\rho)$) was shown in \cite[Lemma B.5]{OMK24a} and is easily extended to full distributivity of $\amp$ over $+$ on $\SNS$ by linearity.
\end{proof}

The same structure applies to automata. Let $S$ be a finite set. Let $\SNs$ denote the family of finite signed measures over $\Ns$ with finite support, with the operations defined in the same way as in $\SNS$.
\begin{lemma}
\label{lem:RAlgS}
The structure $(\SNs,+,-,0,\amp,\dirac{\mset{}})$ is an $\reals$-algebra and is naturally isomorphic to the polynomial algebra $\reals[S]$.
\end{lemma}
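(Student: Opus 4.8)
The plan is to get the $\reals$-algebra structure by rerunning the proof of Lemma~\ref{lem:RAlgSigma} with $\Ns$ in place of $\NS$, and then to build the isomorphism with $\reals[S]$ as the $\reals$-linear map identifying the Dirac basis of $\SNs$ with the monomial basis of $\reals[S]$. \textbf{$\reals$-algebra structure.} Since $S$ is finite, $\Ns=\naturals^S$ is the set of finite multisets over $S$, and the proof of Lemma~\ref{lem:RAlgSigma}---vector space axioms under pointwise $+,-,0$; associativity and commutativity of $\amp$ inherited from those of the independent product $\otimes$ and of multiset union $\uplus$; $\dirac{\mset{}}$ a two-sided unit; distributivity of $\amp$ over $+$ by linearity---uses nothing about $\Sigma^*$ beyond its being a set, so it applies verbatim. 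The one extra point is that the finite-support condition is preserved: $\dirac{\mset{}}$ has finite support, finite support is closed under $+,-$ and scalar multiples, and $\dirac m\amp\dirac{m'}=\dirac{m\uplus m'}$ (computed below) gives $\supp(\mu\amp\nu)\subs\set{m\uplus m'}{m\in\supp\mu,\ m'\in\supp\nu}$, which is finite. Hence $\SNs$ is closed under the operations and is an $\reals$-algebra.

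\textbf{The isomorphism.} Write a multiset $m\in\Ns$ through its multiplicities $m(s)$, $s\in S$. Two facts: the finitely supported signed measures on $\Ns$ form the free $\reals$-vector space on the set $\Ns$, with basis $\set{\dirac m}{m\in\Ns}$; and the monomials $\prod_{s\in S}s^{m(s)}$, as $m$ ranges over $\naturals^S=\Ns$, form an $\reals$-basis of $\reals[S]$. As the two bases are indexed by the same set, the $\reals$-linear extension
\[
\phi:\SNs\to\reals[S],\qquad \phi\Bigl(\textstyle\sum_i r_i\dirac{m_i}\Bigr)=\sum_i r_i\prod_{s\in S}s^{m_i(s)},
\]
is a well-defined $\reals$-linear isomorphism, sending $\dirac{\mset{}}$ to the empty product $1$. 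For multiplicativity it is enough, by $\reals$-bilinearity of $\amp$ and of polynomial multiplication, to check basis elements: $\dirac m\amp\dirac{m'}=(\dirac m\otimes\dirac{m'})\circ\uplus^{-1}=\dirac{(m,m')}\circ\uplus^{-1}=\dirac{m\uplus m'}$, and $\phi(\dirac{m\uplus m'})=\prod_s s^{m(s)+m'(s)}=\phi(\dirac m)\,\phi(\dirac{m'})$. So $\phi$ is an $\reals$-algebra isomorphism, and it is the unique one sending $\dirac{\mset s}$ to the indeterminate $s$, which is the precise sense in which $\SNs$ is ``the'' free $\reals$-algebra on $S$.

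\textbf{Naturality.} A function $f:S\to T$ induces the substitution homomorphism $\reals[S]\to\reals[T]$, $s\mapsto f(s)$, and the pushforward $\SNs\to\SNt$ along the multiset map $m\mapsto\bigl(t\mapsto\sum_{f(s)=t}m(s)\bigr)$, making both $S\mapsto\SNs$ and $S\mapsto\reals[S]$ functorial. Chasing $\dirac m$ around the naturality square reduces, writing $n_t=\sum_{f(s)=t}m(s)$, to the identity $\prod_{s\in S}f(s)^{m(s)}=\prod_{t\in T}t^{n_t}$---substitute $s\mapsto f(s)$ in $\prod_s s^{m(s)}$ and collect powers of each $t$---so, extending by linearity, $\phi$ is natural in $S$.

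\textbf{Main obstacle.} Nothing here is deep. The point needing care is the structural claim underlying the isomorphism, that the finitely supported signed measures on the countable set $\Ns$ really are the free $\reals$-vector space on $\Ns$ (so $\set{\dirac m}{m\in\Ns}$ is an honest basis and $\phi$ is well-defined and bijective) and that this basis lines up with the monomial basis of $\reals[S]$ under the canonical identification of index sets. The only other thing to pin down is the morphism class making $S\mapsto\SNs$ a functor, so that the naturality statement matches the conventions of \cite{OMK25a}.
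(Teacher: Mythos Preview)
Your proof is correct and follows essentially the same route as the paper: identify the Dirac measures $\dirac m$ with monomials $\prod_s s^{m(s)}$ via the commutative-monoid isomorphism $(\Ns,\uplus,\mset{})\cong(\monom[S],\cdot,1)$, extend linearly, and check naturality on basis elements. The only cosmetic differences are that the paper runs the isomorphism in the other direction (invoking the universal property of $\reals[S]$ to produce $\phi_S:\reals[S]\to\SNs$ and then arguing bijectivity), and that the paper spells out distributivity by a short explicit calculation in the discrete setting rather than citing Lemma~\ref{lem:RAlgSigma} verbatim.
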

\begin{proof}
The proof that $\SNs$ is an $\reals$-algebra is the same as in Lemma \ref{lem:RAlgSigma} except for the distributivity argument. However, here the argument is easier, since the space is discrete. To show distributivity, for any measurable $B\subs\Ns$,
\begin{align*}
(\mu \amp \nu)(B)
&= ((\mu \otimes \nu)\circ\uplus^{-1})(B)
= (\mu \otimes \nu)(\set{(m,\ell)}{m\uplus\ell\in B})
= \sum_{m\uplus\ell\in B}\mu(m)\cdot\nu(\ell),
\end{align*}
therefore
\begin{align*}
((\mu \amp (\nu + \rho))(B)
&= \sum_{m\uplus\ell\in B}\mu(m)\cdot(\nu + \rho)(\ell)
= \sum_{m\uplus\ell\in B}\mu(m)\cdot\nu(\ell) + \sum_{m\uplus\ell\in B}\mu(m)\cdot\rho(\ell)\\
&= (\mu\amp\nu)(B) + (\mu\amp\rho)(B)
= ((\mu\amp\nu) + (\mu\amp\rho))(B).
\end{align*}
As $B$ was arbitrary, $\mu \amp (\nu + \rho) = (\mu\amp\nu) + (\mu\amp\rho)$. Thus the structure is an $\reals$-algebra.

Now we argue that $\SNs$ is naturally isomorphic to $\reals[S]$. Since $\reals[S]$ is the free $\reals$-algebra on generators $S$, there is a unique $\reals$-algebra homomorphism $\phi_S:\reals[S]\to\SNs$ extending $s\mapsto\dirac{\{s\}}$. This map specializes to a bijection $\phi_S:\monom[S]\to\naturals^S$, where $\monom[S]$ denotes the set of monomials of $R[S]$, in which $\phi_S(m)(s)$ is the degree of $s$ in $m$. For example, $\phi_S(s^2t^3) = \mset{s,s,t,t,t}$. Moreover, this bijection is an isomorphism of commutative monoids $(\monom[S],\cdot,1)\to(\Ns,\uplus,\mset{})$, and both structures are representations of the free commutative monoid on generators $S$. As elements of $\reals[S]$ and $\SNs$ are simply finite linear combinations of $\monom[S]$ and $\Ns$, respectively, $\phi_S:\reals[S]\to\SNs$ is an isomorphism of $\reals$-algebras.

The isomorphism is natural in the sense that for any set function $f:S\to T$, the diagram
\begin{align*}
\begin{tikzpicture}[->, >=stealth', auto]
\small
\node (NW) {$\reals[S]$};
\node (NE) [right of=NW, node distance=24mm] {$\SNs$};
\node (SW) [below of=NW, node distance=12mm] {$\reals[T]$};
\node (SE) [below of=NE, node distance=12mm] {$\SNt$};
\path (NW) edge node[swap] {$\reals[f]$} (SW);
\path (NW) edge node {$\phi_S$} (NE);
\path (NE) edge node {$\SS(\naturals^f)$} (SE);
\path (SW) edge node[swap] {$\phi_T$} (SE);
\end{tikzpicture}
\end{align*}
commutes, where $\SS(\naturals^f)$ replaces each occurrence of $s\in S$ with $f(s)$ in a given multiset, and $\reals[f]$ substitutes $f(s)$ for the indeterminate $s$ in a given polynomial.
\end{proof}

\subsection{Automata as $\reals[\beps,\Sigma\cd S]$-coalgebras}

In \cite{OMK25a}, an automaton $(S,\struct)$ was defined to be a coalgebra of type $\struct:S\to\DD(\naturals\times(\Ns)^\Sigma)$. Intuitively, an agent at state $s\in S$ samples the distribution $\struct(s)$, yielding an outcome $(n,m_a\mid a\in\Sigma)\in\naturals\times(\Ns)^\Sigma$, then spawns several other agents, each of which either halts and accepts, halts and rejects, or scans a letter and proceeds to a new state. The number of accepting agents is $n$, and after this step, the total number of remaining active agents is $\sum_{a\in\Sigma}\len{m_a}$.

Under the natural isomorphism $\phi_S$, the structure map translates to type $\struct:S\to\reals[\beps,\Sigma\cd S]$, where $\Sigma\cd S = \set{a\cd s}{a\in\Sigma,\ s\in S}$. The constant coefficient is the probability of failure. For example, for the automaton fragment shown in \cite[Fig.~1]{OMK25a},
\begin{align*}
\struct(s) = p\beps^2(a\cd s)(b\cd t) + q\beps(a\cd u)(a\cd v) + r\eps^2(a\cd s)(b\cd t)^2 + (1-p-q-r)(a\cd s)^2(a\cd t)(b\cd t).
\end{align*}
Here $a\cd s$ and $\Sigma\cd S$ are just alternative notation for $(a,s)$ and $\Sigma\times S$, respectively, but the notation will take on added significance when we discuss monoid actions on $R$-algebras in the next section. 

\section{Monoid Actions and $(M,R)$-algebras}
\label{sec:monoidactions}

In \cite{OMK25a}, the notion of an \emph{injective monoid action} $(\cd):M\times X\to X$ of a monoid $M$ on a set $X$ was used to define the semantics of sequential composition. The action satisfies the rules $1\cd u=u$ and $xy\cd u = x\cd(y\cd u)$ and is injective in the sense that $x\cd u=x\cd v$ implies $u=v$. It was shown in \cite{OMK25a} that there is a canonical way to lift the monoid action to multisets $\NX$ and to distributions $\DNX$; moreover, the lifted actions are also injective.

In the application of \cite{OMK25a}, the monoid $M$ is the free monoid $\Sigma^*$, which acts injectively on $\Sigma^*$, $\NS$, and $\DNS$ in the following way: for $x,y\in\Sigma^*$, $\alpha\in \NS$, and $\mu\in \DNS$,
\begin{align}
x\cd y &= xy
&
x\cd\alpha &= \mset{x\cd z \mid z\in \alpha} = \lam y{\begin{cases}
\alpha(z), & \text{if $\exists z\ y=xz$},\\
0, & \text{otherwise}
\end{cases}}
&
x\cd\mu &= \mu\circ(x\cd-)^{-1}.
\label{eq:Saction}
\end{align}

This construction inspires an extension of $R$-algebras to include a monoid action respecting the $R$-algebra structure. Let $V$ be an $R$-algebra with a distinguished element $\beps$ and let $M$ be a monoid. Let us define an \emph{$M$-action} on $V$ to be a map $(\cd):M\times V\to V$ such that
\begin{enumerate}[(i)]
\item
$(\cd)$ is a monoid action; that is, $x\cd(y\cd u) = xy\cd u$ and $1\cd u = u$;
\item
each $x\cd-:V\to V$ is an $R$-algebra homomorphism:
\begin{align*}
& x\cd(uv) = (x\cd u)(x\cd v)
&& x\cd(ru+sv) = r(x\cd u) + s(x\cd v),\ r,s\in R
&& x\cd 1 = 1.
\end{align*}
It follows that $x\cd 0 = 0$.
\end{enumerate}
In addition, the monoid action is \emph{injective} provided
\begin{enumerate}[(i)]
\setcounter{enumi}2
\item
if $x\cd u = 0$, then $u=0$.
\end{enumerate}
Condition (ii) implies that the action $x\cd-$ on $V$ is uniquely determined by its action on a generating set. Condition (iii) implies that $x\cd-$ is a monomorphism.\footnote{A caveat: the monoid action $(\cd)$ should not be confused with multiplication $(\cdot)$ in $V$. Note that $\beps\ne 1$ and $x\cd\beps\ne\beps$ in general, whereas $x\cd 1=1$. The expression $x\cdot\beps$ does not make sense, as $x$ is not an element of $V$. We will typically use juxtaposition for multiplication in $M$.}

For a fixed monoid $M$, the $R$-algebras with an $M$-action are called \emph{$(M,R)$-algebras}. They form a category $\MRAlg$ whose morphisms are $R$-algebra morphisms such that for all $x\in M$, $f(x\cd u) = x\cd f(u)$ and $f(\beps) = \beps$.

We are interested in two particular $\reals$-algebras with injective $\Sigma^*$-actions. The first is $\SNS$ with the $\Sigma^*$-action \eqref{eq:Saction} for $\DNS$ extended to $\SNS$ by linearity. The distinguished element $\beps$ is $\dirac{\{\eps\}}$.

The second is the free $\SR$-algebra generated by a finite set $S$ with monoid generators $\Sigma$. This is $\free = \reals[\SeSSc]$, the algebra of polynomials with indeterminates $x\cd\beps$ and $x\cd s$ for $x\in\Sigma^*$ and $s\in S$. Here $\beps$ is simply a primitive symbol.

The \Set-endofunctor $F = \reals[\Sigma^*\cd\beps,\Sigma^*\cd -]$ maps a function $f:S\to T$ to
\begin{align*}
& Ff:\reals[\SeSSc]\to\reals[\Sigma^*\cd\beps,\Sigma^*\cd T]
&& Ff(p) = p\subst{f(s)}{s \mid s\in S}
\end{align*}
obtained by substituting $f(s)$ for $s\in S$ in the given expression $p$. If desired, the resulting polynomial can be reduced to canonical form by combining like terms. 

In $\free$, any expression over the $\reals$-algebra operators and $(\cd)$ can be flattened to canonical form $\reals[\SeSSc]$ by applying the rules (i) and (ii) in the left-to-right direction and combining like terms of the resulting polynomial. In particular, $F$ carries a monad structure whose multiplication $\flatten_S:F^2S\to FS$
is achieved in this way. The monad unit maps $s$ as an indeterminate to $s$ as a monomial. The $\SRAlg$ objects are the Eilenberg-Moore algebras for this monad. 

Since $\free$ is free on generators $S$ (and $\Sigma$), any set function $f:S\to V$, where $V$ is a $\SRAlg$, extends uniquely to a $\SRAlg$ morphism $f^\ddag:\free\to V$ using the properties (i) and (ii) above. More precisely, $f^\ddag = \sigma\circ Ff$, where $\sigma:FV\to V$ is the structure map of $V$, regarded as an Eilenberg-Moore algebra for the monad $F$.

Similarly, since $\free$ is also the free $\reals$-algebra on generators $\SeSS$, any set function $f:\SeSS\to V$, where $V$ is an $\reals$-algebra, uniquely determines an $\reals$-algebra morphism $f^\dagger:\free\to V$. If $V$ is also $\free$, then the resulting expression can be reduced to canonical form $\reals[\SeSSc]$, but the map $f^\dagger$ is not necessarily a $\SR$-algebra morphism.

\begin{lemma}
\label{lem:prodf}
Let $S$ be a set and $V$ a $\SR$-algebra. Let $f,g:S\to V$ be set functions. Then $(g^\ddag\circ f)^\ddag = g^\ddag\circ f^\ddag$.
\end{lemma}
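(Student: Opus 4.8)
The plan is to compare both sides as $\SRAlg$ morphisms $\free\to V$ and check that they agree on the generating set $S$, invoking the universal property of $\free$. First I would record that each side really is a $\SRAlg$ morphism: the right-hand side is the composite of the $\SRAlg$ morphisms $f^\ddag\colon\free\to\free$ and $g^\ddag\colon\free\to V$, while the left-hand side is $h^\ddag$ for the set function $h=g^\ddag\circ f\colon S\to V$ and hence a $\SRAlg$ morphism by the defining property of $(-)^\ddag$. (Implicitly $f$ is understood to take values in $\free$, so that $g^\ddag\circ f$ is a well-defined set map $S\to V$.) Since $\free$ is the free $\SRAlg$ on $S$, two $\SRAlg$ morphisms out of $\free$ that restrict to the same set function on $S$ must coincide, by uniqueness of the extension; so it suffices to show the two morphisms agree on each $s\in S$.

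Then I would evaluate both restrictions using the one fact that $k^\ddag$ extends $k$, i.e.\ $k^\ddag(s)=k(s)$ for every $s\in S$ and every set function $k$ out of $S$ (equivalently $k^\ddag\circ\eta_S=k$, where the unit $\eta_S$ sends the indeterminate $s$ to the monomial $s$). Taking $k=g^\ddag\circ f$ gives $(g^\ddag\circ f)^\ddag(s)=g^\ddag(f(s))$, and taking $k=f$ gives $(g^\ddag\circ f^\ddag)(s)=g^\ddag\bigl(f^\ddag(s)\bigr)=g^\ddag(f(s))$. These agree, so the morphisms are equal and the lemma follows.

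I do not anticipate a genuine obstacle here; the content is bookkeeping around the universal property. The only points that need attention are: confirming up front that both expressions are $\SRAlg$ morphisms, which is what justifies reducing equality to agreement on $S$; not conflating the two uses of $(-)^\ddag$ (the inner one extends a map into $V$, whereas $f^\ddag$ extends a map into $\free$, viewed as a $\SRAlg$ via the structure map $\flatten_S$); and noting why agreement on $S$ suffices even though a general element of $\free$ is a polynomial in the indeterminates $x\cd\beps$ and $x\cd s$ — a $\SRAlg$ morphism preserves $\beps$, each action $x\cd(-)$, and the $\reals$-algebra operations, and so is determined by its values on $S$. A more abstract alternative would observe that $(-)^\ddag$ is the transpose of the free/forgetful adjunction and that $f^\ddag=\flatten_S\circ Ff$ is Kleisli extension, then derive the identity from naturality of $\flatten$ and the associativity law of the monad $F$; but the elementary argument is shorter.
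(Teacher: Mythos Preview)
Your proof is correct but takes a genuinely different route from the paper's. You argue via the universal property: both sides are $\SRAlg$ morphisms out of the free algebra $\free$, so it suffices to check agreement on the generating set $S$, which is immediate from $k^\ddag(s)=k(s)$. The paper instead gives a purely diagrammatic argument: writing $f^\ddag=\flatten_S\circ Ff$ and $g^\ddag=\sigma\circ Fg$, it expands both sides and reduces the claim to $\sigma\circ F\sigma\circ F^2g=\sigma\circ Fg\circ\flatten_S$, which it then obtains from naturality of $\flatten$ together with the Eilenberg--Moore associativity axiom $\sigma\circ F\sigma=\sigma\circ\flatten_V$. Your approach is shorter and more concrete; the paper's makes the monadic machinery explicit and transfers verbatim to any monad and any of its algebras. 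You even anticipate the paper's argument in your closing paragraph as the ``more abstract alternative.'' Your parenthetical about typing is also apt: for $g^\ddag\circ f$ and $g^\ddag\circ f^\ddag$ to make sense one needs $f:S\to\free$ (with $f^\ddag=\flatten_S\circ Ff$), which is exactly how the paper's proof treats it and how the lemma is applied later with $f=\struct$.
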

\begin{proof}
If $\sigma$ is the structure map of $V$, then $g^\ddag = \sigma\circ Fg$. The structure map of the free algebra $FS$ is the monad multiplication $\flatten_S$. Then
\begin{align*}
(g^\ddag\circ f)^\ddag &= \sigma\circ F(g^\ddag\circ f) = \sigma\circ F(g^\ddag)\circ Ff = \sigma\circ F(\sigma\circ Fg)\circ Ff = \sigma\circ F\sigma\circ F^2g\circ Ff\\
g^\ddag\circ f^\ddag &= \sigma\circ Fg\circ\flatten_S\circ Ff,
\end{align*}
so it suffices to show
$\sigma\circ F\sigma\circ F^2g = \sigma\circ Fg\circ\flatten_S$. This is established by the following diagram.
\begin{align*}
\begin{tikzpicture}[->, >=stealth', auto]
\small
\node (NW) {$F^2S$};
\node (N) [right of=NW, node distance=24mm] {$F^2V$};
\node (NE) [right of=N, node distance=24mm] {$FV$};
\node (SW) [below of=NW, node distance=12mm] {$FS$};
\node (S) [below of=N, node distance=12mm] {$FV$};
\node (SE) [below of=NE, node distance=12mm] {$V$};
\path (NW) edge node {$F^2g$} (N);
\path (N) edge node {$F\sigma$} (NE);
\path (SW) edge node[swap] {$Fg$} (S);
\path (S) edge node[swap] {$\sigma$} (SE);
\path (NW) edge node[swap] {$\flatten_S$} (SW);
\path (N) edge node {$\flatten_V$} (S);
\path (NE) edge node {$\sigma$} (SE);
\end{tikzpicture}
\end{align*}
The left rectangle commutes because $\flatten$ is a natural transformation, and the right rectangle commutes due to an axiom of monads.
\end{proof}

\subsection{Partial Evaluation}
\label{sec:topoly}

Let $V$ be a $\SR$-algebra. Let $h:\free\to V$ be an $\reals$-algebra morphism. Let $B\in\Sigma^*$ be a set of pairwise prefix-incomparable strings. Let $B\cd h$ be the unique $\reals$-algebra homomorphism such that for $y\in\SeSS$,
\begin{align*}
& (B\cd h)(y) = \begin{cases}
x\cd h(z) & \text{if $y=x\cd z$ for some $x\in B$ and $z\in\Sigma^*\cd S$}\\
h(y) & \text{if $y\in\Sigma^*\cd\beps$}\\
y & \text{otherwise.}
\end{cases}
\end{align*}
This is well defined, because $y\in\Sigma^*\cd S$ can have at most one prefix $x\in B$. Let $x\cd h = \{x\}\cd h$. The map $h$ takes values in $V[\Sigma^*\cd S]$.

If $h$ is a $\SR$-algebra morphism, then $B\cd h$ partially evaluates $h$ by substituting $h(s)$ for $s$ in all indeterminates of the form $y\cd s$ for $x$ a prefix of $y$, leaving the other indeterminates $y\cd s$ unchanged.

We remark that $x\cd h$ is not necessarily a $\SR$-algebra homomorphism. For example, for $V=\free$, if $a\cd h$ were a $\SR$-algebra morphism, then $a\cd h(s) = (a\cd h)(a\cd s) = a\cd((a\cd h)(s)) = a\cd s$. By injectivity of $a\cd-$, this only holds when $h$ is the identity function.

The following lemma establishes some properties of $x\cd-$ and $B\cd-$ on morphisms.
\begin{lemma}
\label{lem:cdprops}
\label{lem:ABh}
Let $V$ be a $\SR$-algebra.
\begin{enumerate}[{\upshape(i)}]
\item
Let $h:\free\to V$ be an $\reals$-algebra homomorphism. If each of $A$ and $B$ is a set of pairwise prefix-incomparable strings, then so is $AB$, and $AB\cd h = A\cd(B\cd h)$.
\item
Let $f:\free\to\free$ and $g:\free\to V$ be $\reals$-algebra homomorphisms such that $f$ fixes $\reals[\Sigma^*\cd\beps]$ pointwise. Then $(x\cd g)\circ(x\cd f) = x\cd(g\circ f)$.
\item
Let $f:\free\to V$ be a $\SR$-algebra homomorphism. Let $A$ be a finite tree. Then $f$ and $\del A\cd f$ agree on $\reals[\Sigma^*\cd\beps,(\Sigma^*\setminus A)\cd S]$.
\item
Let $f,g:\free\to V$ be $\reals$-algebra homomorphisms that agree on $\reals[\Sigma^*\cd\beps]$. Let $A$ and $B$ be disjoint sets of strings such that the strings in $A\cup B$ are pairwise prefix-incomparable. Then $(A\cd f)\circ(B\cd g) = (B\cd g)\circ(A\cd f)$. If $f=g$, this is just $(A\cup B)\cd f$.
\end{enumerate}
\end{lemma}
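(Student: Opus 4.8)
The plan is to prove the four parts in order, each time exploiting the fact that an $\reals$-algebra morphism out of $\free$ is completely determined by its values on the generating set $\SeSS$, so that two such morphisms are equal as soon as they agree on each indeterminate $y\cd\beps$ and $y\cd s$. Thus every equality between composites of $\reals$-algebra homomorphisms reduces to a finite case analysis on a single generator. For part~(i), I would first check the combinatorial claim that $AB = \set{xy}{x\in A,\,y\in B}$ is prefix-incomparable: if $x_1y_1$ is a prefix of $x_2y_2$ with $x_1,x_2\in A$ and $y_1,y_2\in B$, then either $x_1$ is a prefix of $x_2$ (forcing $x_1=x_2$ by incomparability of $A$, hence $y_1$ a prefix of $y_2$, hence $y_1=y_2$) or $x_2$ is a proper prefix of $x_1$, contradicting incomparability of $A$. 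Then I would evaluate both $AB\cd h$ and $A\cd(B\cd h)$ on a generator $y\cd s$: the right-hand side substitutes according to whether $y$ has a prefix in $B$, and then the outer $A\cd-$ acts on the result; using the monoid-action law $x\cd(y\cd u) = xy\cd u$ from condition~(i) of an $M$-action, one sees both sides produce $xy\cd h(z)$ exactly when $y$ factors as $xyz'$ with the appropriate prefixes, and leave $y\cd s$ untouched otherwise. Generators of the form $y\cd\beps$ are handled by $h$ on both sides since $B\cd h$ and $A\cd(B\cd h)$ both pass $\Sigma^*\cd\beps$ through to $h$; this uses that $A\cd-$ only rewrites generators in $\Sigma^*\cd S$.

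For part~(ii), again evaluate $(x\cd g)\circ(x\cd f)$ and $x\cd(g\circ f)$ on a generator. On $y\cd\beps$, the hypothesis that $f$ fixes $\reals[\Sigma^*\cd\beps]$ means $(x\cd f)(y\cd\beps) = y\cd\beps$, and $x\cd f$ also fixes $\Sigma^*\cd\beps$ by definition, so the left side gives $(x\cd g)(y\cd\beps) = g(y\cd\beps)$, matching $x\cd(g\circ f)$ applied to $y\cd\beps$, which is $(g\circ f)(y\cd\beps) = g(f(y\cd\beps)) = g(y\cd\beps)$. On $y\cd s$: if $x$ is a prefix of $y$, write $y = xz$; then $(x\cd f)(y\cd s) = x\cd f(s)$, and applying $x\cd g$ — which is an $\reals$-algebra homomorphism and commutes with the action $x\cd-$ in the sense that $(x\cd g)(x\cd u) = x\cd g(u)$ — yields $x\cd g(f(s)) = x\cd (g\circ f)(s)$, which is precisely $x\cd(g\circ f)$ evaluated at $y\cd s$. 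If $x$ is not a prefix of $y$, both composites leave $y\cd s$ fixed. The one subtlety to spell out is why $(x\cd g)(x\cd u) = x\cd g(u)$ for arbitrary $u\in\free$, not just for generators: it holds on generators by the defining formula for $x\cd g$, and both sides are $\reals$-algebra homomorphisms in $u$ (the right side because $x\cd-$ is one on $V$ and $g$ is one), so it holds everywhere by the freeness argument.

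For part~(iii), I would take $A$ a finite tree and recall that $\del A$ is the set of prefix-minimal elements of $\Sigma^*\setminus A$, which is indeed prefix-incomparable. The claim is that $f$ and $\del A\cd f$ agree on the subalgebra $\reals[\Sigma^*\cd\beps,(\Sigma^*\setminus A)\cd S]$; by freeness it suffices to check generators in that subalgebra. On $y\cd\beps$ both give $f(y\cd\beps)$. On $y\cd s$ with $y\notin A$: since $A$ is prefix-closed, $y\notin A$ means $y$ has a (unique, prefix-minimal) prefix $x\in\del A$, say $y = xz$; then $(\del A\cd f)(y\cd s) = x\cd f(z\cd s)$, and because $f$ is a $\SR$-algebra morphism it satisfies $f(x\cd u) = x\cd f(u)$, so $f(y\cd s) = f(xz\cd s) = f(x\cd(z\cd s)) = x\cd f(z\cd s)$ — the two agree. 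This is the step where the $\SR$-algebra hypothesis on $f$ is genuinely used, whereas parts~(i),(ii),(iv) only need $\reals$-algebra morphisms; I expect this to be the conceptually cleanest part.

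For part~(iv), with $A,B$ disjoint and $A\cup B$ pairwise prefix-incomparable and $f,g$ agreeing on $\reals[\Sigma^*\cd\beps]$, I would evaluate $(A\cd f)\circ(B\cd g)$ and $(B\cd g)\circ(A\cd f)$ on a generator $y\cd s$. The key observation is that $y$ can have at most one prefix in $A\cup B$, so the three cases are: $y$ has a prefix $x\in A$, $y$ has a prefix $x\in B$, or neither. In the first case $(B\cd g)(y\cd s) = y\cd s$ (no $B$-prefix), then $(A\cd f)$ sends it to $x\cd f(z)$ where $y=xz$; going the other way, $(A\cd f)(y\cd s) = x\cd f(z)$, which has no generator with a $B$-prefix after expansion — here I must argue that $x\cd f(z)$, expanded in canonical form, only involves indeterminates $w\cd t$ with $w$ having $x\in A$ as a prefix, hence no prefix in $B$ since $A\cup B$ is incomparable — so $(B\cd g)$ fixes it. The symmetric case is identical with roles swapped, and the neither-case is trivial. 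On generators $y\cd\beps$, both composites return $f(y\cd\beps) = g(y\cd\beps)$ using the agreement hypothesis. The final sentence, that $f=g$ collapses this to $(A\cup B)\cd f$, then follows by comparing the case analysis above with the definition of $(A\cup B)\cd f$ on generators.

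The main obstacle I anticipate is the bookkeeping in part~(iv): specifically, verifying that applying $A\cd f$ to a generator produces (after flattening) only indeterminates whose string-part has a prefix in $A$, so that the subsequent $B\cd g$ acts as the identity on it. This is where one must be careful that $x\cd f(z)$ — with $f$ merely an $\reals$-algebra homomorphism, not a $\SR$-morphism — still has this "support-localization" property; it holds because $x\cd-$ on $\free$ maps each indeterminate $w\cd t$ to $xw\cd t$ and each $w\cd\beps$ to $xw\cd\beps$, and $xw$ always has $x$ as a prefix. Packaging this cleanly (perhaps via an auxiliary observation about the support of $x\cd u$) is the one place where care is needed; the rest is routine evaluation on generators plus the monoid-action identities.
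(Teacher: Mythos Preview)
Your plan is essentially the paper's: reduce each equality to a case analysis on a single generator $y\in\SeSS$ and unfold the definition of $B\cd h$. Three small corrections are worth making. In~(i), your description of $A\cd(B\cd h)$ has the roles of $A$ and $B$ reversed: the outer $A\cd-$ checks for an $A$-prefix of $y$ first, writes $y=u\cd w$ with $u\in A$, and only then applies $B\cd h$ to $w$; fixing this, the case split matches $AB\cd h$ exactly as in the paper. In~(ii), for a generator $y\cd s$ with $y=xz$ you should get $(x\cd f)(y\cd s)=x\cd f(z\cd s)$, not $x\cd f(s)$; you drop the $z$ throughout, which is harmless but should be restored. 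In~(iv), the ``support-localization'' obstacle you anticipate does not arise: the paper views $B\cd h$ as a map $\free\to V[\Sigma^*\cd S]$, so $x\cd f(z)$ is an element of $V$ and hence a constant in $V[\Sigma^*\cd S]$ that the subsequent $(B\cd g)$ fixes automatically---there is no need to track which indeterminates appear inside it. With that reading the paper simply writes down the common four-case value of both composites on a generator, which is shorter than your planned argument.
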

\begin{proof}
For (i), that the elements of $AB$ are pairwise prefix-incomparable is obvious. For the second statement, it suffices to show that the two maps agree on any indeterminate $y\in\SeSS$.
\begin{align*}
(A\cd(B\cd h))(y)
&= \begin{cases}
u\cd (B\cd h)(w) & \text{if $\exists u\in A\ \exists w\in\Sigma^*\cd S\ y=u\cd w$}\\
(B\cd h)(y) & \text{if $y\in\Sigma^*\cd\beps$}\\
y & \text{otherwise}
\end{cases}\\
&= \begin{cases}
u\cd v\cd h(z) & \text{if $\exists u\in A\ \exists w\in\Sigma^*\cd S\ y=u\cd w$ and $\exists v\in B\ \exists z\in\Sigma^*\cd S\ w=v\cd z$}\\
u\cd w & \text{if $\exists u\in A\ \exists w\in\Sigma^*\cd S\ y=u\cd w$ and $\neg(\exists v\in B\ \exists z\in\Sigma^*\cd S\ w=v\cd z)$}\\
h(y) & \text{if $y\in\Sigma^*\cd\beps$}\\
y & \text{otherwise}
\end{cases}\\
&= \begin{cases}
uv\cd h(z) & \text{if $\exists u\in A\ \exists v\in B\ \exists z\in\Sigma^*\cd S\ y=uv\cd z$ and $\exists w\in\Sigma^*\cd S\ w=v\cd z$}\\
y & \text{if $\exists u\in A\ \exists w\in\Sigma^*\cd S\ y=u\cd w$ and $\neg(\exists v\in B\ \exists z\in\Sigma^*\cd S\ w=v\cd z)$}\\
h(y) & \text{if $y\in\Sigma^*\cd\beps$}\\
y & \text{otherwise}
\end{cases}\\
&= \begin{cases}
x\cd h(z) & \text{if $\exists x\in AB\ \exists z\in\Sigma^*\cd S\ y=x\cd z$}\\
h(y) & \text{if $y\in\Sigma^*\cd\beps$}\\
y & \text{otherwise}
\end{cases}\\
&= x\cd AB(y).
\end{align*}

For (ii), it suffices to show that that two sides agree on all $y\in\SeSS$.
\begin{align*}
&(x\cd g)((x\cd f)(y))\\
&= (x\cd g)\left(\begin{cases}
x\cd f(z), & \text{if $\exists z\in\Sigma^*\cd S\ y=x\cd z$,}\\
f(y), & \text{if $y\in\Sigma^*\cd\beps$,}\\
y, & \text{otherwise}
\end{cases}\right)
= \begin{cases}
(x\cd g)(x\cd f(z)), & \text{if $\exists z\in\Sigma^*\cd S\ y=x\cd z$,}\\
(x\cd g)(f(y)), & \text{if $y\in\Sigma^*\cd\beps$,}\\
(x\cd g)(y), & \text{otherwise}
\end{cases}\\
&= \begin{cases}
x\cd g(f(z)), & \text{if $\exists z\in\Sigma^*\cd S\ y=x\cd z$,}\\
g(f(y)), & \text{if $y\in\Sigma^*\cd\beps$,}\\
y, & \text{otherwise}
\end{cases}
\qquad = (x\cd(g\circ f))(y).
\end{align*}

For (iii), it suffices to show that that two sides agree on all $y\in\Sigma^*\cd\beps\cup(\Sigma^*\setminus A)\cd S$.
\begin{align*}
(\del A\cd f)(y)
&= \begin{cases}
x\cd f(z) = f(x\cd z) = f(y), & \text{if $\exists x\in\del A\ \exists z\in\Sigma^*\cd S\ y=x\cd z$,}\\
f(y), & \text{if $y\in\Sigma^*\cd\beps$,}\\
y, & \text{if $y\in(\Sigma^*\setminus A)\cd S$ and $\neg(\exists x\in\del A\ \exists z\ y=x\cd z)$.}
\end{cases}
\end{align*}
But the third alternative cannot happen, since every element of $\Sigma^*\setminus A$ has a prefix in $\del A$.

For (iv), it suffices to show that the two maps agree on any $y\in\SeSS$.
Expanding the definitions in both cases yields the value
\begin{align*}
\begin{cases}
u\cd g(w) & \text{if $\exists u\in A\ \exists w\in\Sigma^*\cd S\ y=u\cd w$}\\
v\cd f(z) & \text{if $\exists v\in B\ \exists z\in\Sigma^*\cd S\ y=v\cd z$}\\
f(y) & \text{if $y\in\Sigma^*\cd\beps$}\\
y & \text{otherwise.}
\end{cases}
\end{align*}
At most one of the first two alternatives can occur by the assumption of prefix-incomparability.
\end{proof}

\section{Behavioral Equations}
\label{sec:behaveq}

The coinductive definition of the semantic map $\sem-:S\to\DNS$ from \cite{OMK25a} takes the following form in this context.

In \cite{OMK25a}, an ultrametric $d$ on $\DNS$ was defined as follows. For $\alpha,\beta\in\NS$, let $\alpha\equiv_n\beta$ if $\alpha(x)=\beta(x)$ for all $x\in\Sigma^{<n}$, and let $[\alpha]_n$ be the $\equiv_n$-equivalence class of $\alpha$. Then for $\mu\in\SNS$, define
\begin{align*}
\norm \mu = \begin{cases}
2^n, & \text{\parbox[t]{10cm}{if $n$ is the least number such that $\mu([\alpha]_n)\ne 0$ for some $\alpha\in\NS$,}}\\
0, & \text{if no such $n$ exists.}
\end{cases}
\end{align*}
Then $d(\mu,\nu) = \norm{\mu-\nu}$.

We can extend $\norm-$ to functions $f:S\to\SNS$ by taking the $\sup$-norm $\norm f = \max_{s\in S} \norm{f(s)}$. We can say $\max$ instead of $\sup$, since all values are of the form $2^{-n}$ for some $n\in\naturals$ or $0$. The corresponding metric $d(f,g) = \norm{f-g}$ is also an ultrametric and restricts to the one defined in \cite{OMK25a} on $S\to\DNS$.

Let $(S,\struct)$ be an automaton with structure map $\struct:S\to\reals[\beps,\Sigma\cd S]$.
Any set function $f:S\to\SNS$ extends uniquely to a $\SR$-algebra homomorphism $f^\ddag:FS\to\SNS$. Let
\begin{align*}
& \tau:(S\to\SNS)\to(S\to\SNS) && \tau(f) = f^\ddag\circ\struct.
\end{align*}
\begin{align*}
\begin{tikzpicture}[->, >=stealth', auto]
\small
\node (o) {};
\node (NW) at (120:1cm) {$S$};
\node (E) at (0:2cm) {$\SNS$};
\node (SW) at (240:1cm) {$\reals[\beps,\Sigma\cd S]$};
\path (NW) edge node {$\tau(f)$} (E);
\path (NW) edge node[swap] {$\struct$} (SW);
\path (SW) edge node[swap] {$f^\ddag$} (E);
\end{tikzpicture}
\end{align*}

\begin{lemma}
\label{lem:contractive}
With respect to the norm $\norm-$, $\tau$ is contractive with constant of contraction 1/2.
\end{lemma}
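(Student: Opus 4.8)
The plan is to show that $\|\tau(f) - \tau(g)\| \le \tfrac12\|f-g\|$ for any two set functions $f,g\colon S\to\SNS$, by analyzing how the ultrametric norm on $\SNS$ interacts with the automaton step. The key structural fact I would exploit is that $\struct(s)\in\reals[\beps,\Sigma\cd S]$, so every indeterminate appearing in $\struct(s)$ is either $\beps$ or of the form $a\cd t$ with $a\in\Sigma$, $t\in S$. When we apply $f^\ddag$ (the unique $\SR$-algebra extension of $f$), an indeterminate $a\cd t$ is sent to $a\cd f(t)\in\SNS$, i.e. the pushforward of $f(t)$ under the monoid action $a\cd-$ on $\NS$. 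The crucial observation is that this action shifts every string in the support by prepending $a$, so it strictly increases the norm's ``resolution'': if $\alpha\equiv_n\beta$ then $a\cd\alpha\equiv_{n+1}a\cd\beta$, and hence $\|a\cd\mu\| \le \tfrac12\|\mu\|$ for any $\mu\in\SNS$. This is the one-step contraction built into the action, and it is what ultimately produces the factor $1/2$.

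The main steps, in order, would be: (1) Establish the monotonicity/scaling lemma for the action: for each $a\in\Sigma$ and $\mu\in\SNS$, $\|a\cd\mu\| \le \tfrac12\|\mu\|$, directly from the definitions of $a\cd-$ and $\|-\|$ via the equivalence relations $\equiv_n$. (2) Note that $\|-\|$ is an ultrametric seminorm compatible with the $\reals$-algebra structure in the needed directions: $\|\mu+\nu\|\le\max(\|\mu\|,\|\nu\|)$, $\|r\mu\|\le\|\mu\|$ for $r\in\reals$, and $\|\mu\amp\nu\|\le\max(\|\mu\|,\|\nu\|)$ (the last because $\amp$ is defined via $\uplus$, which cannot decrease the length of the shortest ``new'' string — one samples both multisets and unions them, so a discrepancy at level $n$ in the product forces a discrepancy at level $n$ in one of the factors; this should be a short argument using the product measure formula and $\uplus^{-1}$). (3) Write $\tau(f)(s) = f^\ddag(\struct(s))$ and $\tau(g)(s) = g^\ddag(\struct(s))$, and express the difference as a signed combination of monomials of $\struct(s)$ with the $a\cd t$ indeterminates evaluated at $f(t)$ versus $g(t)$. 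Using multilinearity of $\amp$ (distributivity over $+$, Lemma~\ref{lem:RAlgS}'s argument) and a telescoping over the indeterminates of a single monomial, bound $\|f^\ddag(m) - g^\ddag(m)\|$ for each monomial $m$ by $\max_{a,t}\|a\cd f(t) - a\cd g(t)\| = \max_{a,t}\|a\cd(f(t)-g(t))\|$, since the $\beps$ and constant parts of $\struct(s)$ are identical on both sides and contribute nothing to the difference. (4) Apply step (1) to get $\|a\cd(f(t)-g(t))\| \le \tfrac12\|f(t)-g(t)\| \le \tfrac12\|f-g\|$. (5) Take $\max$ over $s$ to conclude $\|\tau(f)-\tau(g)\| \le \tfrac12\|f-g\|$.

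I expect the main obstacle to be step (3): cleanly bounding $\|f^\ddag(m)-g^\ddag(m)\|$ for a monomial $m$ in several indeterminates, where $f$ and $g$ differ at each occurrence. The honest way is a telescoping argument — replace $f(t_i)$ by $g(t_i)$ one factor at a time — using the bilinearity of $\amp$ to isolate a single ``difference factor'' $a_i\cd(f(t_i)-g(t_i))$ multiplied (via $\amp$) by other factors, then invoke the ultrametric bound $\|\mu\amp\nu\|\le\max(\|\mu\|,\|\nu\|)$ from step (2) to discard the unchanged factors without increasing the norm. One subtlety worth care: $\struct(s)$ may have a nonzero constant coefficient (probability of failure) and $\beps$-powers, but since $f^\ddag$ and $g^\ddag$ agree on $\reals[\beps]$ these terms cancel exactly in $\tau(f)(s)-\tau(g)(s)$, so only monomials containing at least one $a\cd t$ factor survive — and every such monomial is, after the telescoping, a scalar times an $\amp$-product with at least one factor of the form $a\cd(f(t)-g(t))$, which is where the $1/2$ enters. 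The rest is routine bookkeeping with $\max$.
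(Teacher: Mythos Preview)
Your approach---telescoping plus abstract norm inequalities---is genuinely different from the paper's, which instead computes $f^\ddag(p)([\alpha]_{n+m})$ directly from the explicit formulas \eqref{eq:blah1}--\eqref{eq:blah3} and observes that the result depends only on the values $f(s)([\gamma]_k)$ for $k\le n$. Your route is more modular; the paper's is more hands-on. With the two fixes below, your argument goes through.

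First, the bound you state in step (2), $\norm{\mu\amp\nu}\le\max(\norm\mu,\norm\nu)$, is true but too weak for step (3): the ``unchanged factors'' in the telescope (such as $\dirac{\{\eps\}}$ or $a\cd f(t)$) have norm $1$, so a $\max$ bound gives only $\le 1$ and nothing is gained. What you actually need---and what the product formula $(\mu\amp\nu)([\alpha]_n)=\sum_{\beta\uplus\gamma=\alpha}\mu([\beta]_n)\nu([\gamma]_n)$ immediately yields---is $\norm{\mu\amp\nu}\le\min(\norm\mu,\norm\nu)$: if $\mu$ vanishes on every $[\beta]_k$ then so does $\mu\amp\nu$. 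With $\min$ in hand, each telescope term is indeed bounded by the norm of its single difference factor, as you intend.

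Second, step (1) is false as stated. The action $a\cd-$ preserves total mass, $(a\cd\mu)(\NS)=\mu(\NS)$, so whenever $\mu(\NS)\ne 0$ one has $\norm\mu=\norm{a\cd\mu}=1$ and there is no halving. The correct statement is $\norm{a\cd\mu}=\tfrac12\norm\mu$ \emph{provided} $\mu(\NS)=0$, equivalently $\norm\mu<1$. In your step (4) the relevant $\mu$ is $f(t)-g(t)$, so you need $f(t)(\NS)=g(t)(\NS)$; this holds when $f,g$ take values in $\DNS$ but not for arbitrary $\SNS$-valued $f,g$. (In fact contractivity fails in that generality: with $\struct(s)=a\cd s$, $f(s)=2\dirac{\mset{}}$, $g(s)=\dirac{\mset{}}$, one gets $\tau(f)-\tau(g)=f-g$ of norm $1$. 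The paper's own reduction ``$\norm{f^\ddag(p)}\le\tfrac12\norm f$'' shares this defect---it is not literally contractivity and is false for $p=1$---so this is a quirk of how the lemma is stated rather than of your strategy.)
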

\begin{proof}
It suffices to show that if $p\in\reals[\beps,\Sigma\cd S]$, then $\norm{f^\ddag(p)} \le \frac 12\norm f$. Then for all $s$, $\norm{\tau(f)(s)} = \norm{f^\ddag(\struct(s))} \le \frac 12\norm f$, so $\norm{\tau(f)} \le \frac 12\norm f$.

We show more generally that for $p\in\reals[\Sigma^*\cd\beps,\Sigma^{\ge m}\cd S]$, $\norm{f^\ddag(p)} \le 2^{-m}\norm f$. Equivalently, if $f(s)\equiv_n 0$ for all $s\in S$, then $f^\ddag(p) \equiv_{n+m} 0$.

We first show this for monomials in $\reals[\Sigma^{\ge m}\cd S]$. Such monomials are of the form $\prod_{x\cd s\in A}x\cd s$, where $A$ is a multiset of pairs $x\cd s$ with $\len x\ge m$. It was shown in \cite{OMK25a} that for $\alpha=\alpha\rest n$,
\begin{align}
(\mu\amp\nu)([\alpha]_n) = \sum_{\beta\uplus\gamma=\alpha}\mu([\beta]_n)\cdot\nu([\gamma]_n).
\label{eq:blah1}
\end{align}
More generally, for $A$ a multiset of signed measures,
\begin{align}
(\bigamp A)([\alpha]_{n})
&= \sum_{\biguplus_{\mu\in A}\beta_\mu=\alpha}\prod_{\mu\in A}\mu([\beta_\mu]_{n}).
\label{eq:blah2}
\end{align}
It was also shown in \cite{OMK25a} that
\begin{align}
(x\cd\mu)([\beta]_{n+\len x})
&= \begin{cases}
\mu([\alpha]_n), & \text{if $\beta=x\cd\alpha$,}\\
0, & \text{otherwise.}
\end{cases}
\label{eq:blah3}
\end{align}
Then for $\alpha=\alpha\rest n$,
\begin{align*}
& f^\ddag(\prod_{x\cd s\in A}x\cd s)([\alpha]_{n+m})\\
&= (\bigamp_{x\cd s\in A}x\cd f(s))([\alpha]_{n+m}) && \text{since $f^\ddag$ is an $\RRAlg$ morphism}\\
&= \sum_{\biguplus_{x\cd s\in A}\beta_{x\cd s}=\alpha}\prod_{x\cd s\in A}(x\cd f(s))([\beta_{x\cd s}]_{n+m}) && \text{by \eqref{eq:blah2}}\\
&= \sum_{\biguplus_{x\cd s\in A}\beta_{x\cd s}=\alpha}\prod_{x\cd s\in A}\begin{cases}
f(s)([\gamma]_{n+m-\len x}), & \text{if $\beta_{x\cd s}=x\cd\gamma$,}\\
0, & \text{otherwise}
\end{cases} && \text{by \eqref{eq:blah3}}\\
&= 0 && \text{by the assumption $f(s)\equiv_n 0$.}
\end{align*}

Finally, for $p\in\reals[\Sigma^*\cd\beps,\Sigma^{\ge m}\cd S]$, $p$ can be written as $\sum_k p_kq_k$, where $p_k\in\reals[\Sigma^*\cd\beps]$ and $q_k\in\monom[\Sigma^{\ge m}\cd S]$. By the above argument, $f^\ddag(q_k)\equiv_{n+m} 0$, and from \eqref{eq:blah1} it follows that if $\nu\equiv_{n+m} 0$, then $\mu\amp\nu\equiv_{n+m} 0$. Then
\begin{align*}
f^\ddag(p)
&= f^\ddag(\sum_k p_kq_k)
= \sum_k f^\ddag(p_k)\amp f^\ddag(q_k) \equiv_{n+m} 0.
\tag*\qedhere
\end{align*}
\end{proof}

As shown in \cite{OMK25a}, $\SNS$ is complete under the metric induced by $\norm-$, thus by the Banach fixpoint theorem, there is a unique fixpoint $\sem-$ such that $\sem- = \tau(\sem-) = \sem-^\ddag\circ\struct$.
\begin{align}
\begin{array}c
\begin{tikzpicture}[->, >=stealth', auto]
\small
\node (o) {};
\node (NW) at (120:1cm) {$S$};
\node (E) at (0:2cm) {$\SNS$};
\node (SW) at (240:1cm) {$\reals[\SeSSc]$};
\path (NW) edge node {$\sem-$} (E);
\path (NW) edge node[swap] {$\struct$} (SW);
\path (SW) edge node[swap] {$\sem-^\ddag$} (E);
\end{tikzpicture}
\end{array}
\label{eq:semantics}
\end{align}
This is the analog of the diagram (10) of \cite{OMK25a}.
Since $\sem-^\ddag = \sigma\cdot F\sem-$, where $\sigma:FS\to S$ is the structure map of $\SNS$ as an E-M algebra for $F$, we can redraw this more conventionally as
\begin{align*}
\begin{tikzpicture}[->, >=stealth', auto]
\small
\node (NW) {$S$};
\node (NE) [right of=NW, node distance=28mm] {$\SNS$};
\node (SW) [below of=NW, node distance=12mm] {$FS$};
\node (SE) [below of=NE, node distance=12mm] {$F(\SNS)$};
\path (NW) edge node {$\sem-$} (NE);
\path (NW) edge node[swap] {$\struct$} (SW);
\path (SW) edge node[swap] {$F\sem-$} (SE);
\path (SE) edge node[swap] {$\sigma$} (NE);
\end{tikzpicture}
\end{align*}

From Lemma \ref{lem:prodf}, $\sem-^\ddag = (\sem-^\ddag\circ\struct)^\ddag = \sem-^\ddag\circ\struct^\ddag$, so $\sem-^\ddag$ is a fixpoint of $f\mapsto f\circ\struct^\ddag$. We henceforth dispense with the notation $^\ddag$ and write $\sem-$ for $\sem-^\ddag$ and $\struct$ for $\struct^\ddag$.

\subsection{Extension of the Structure Map}

Let us define an $\reals$-algebra morphism $\struct_A:\reals[S]\to\free$ for each finite tree $A\subs\Sigma^*$ by induction, maintaining the invariant $\struct_A(p)\in\reals[A\cd\beps,\del A\cd S]$ for $p\in\reals[S]$. Intuitively, the polynomial $\struct_A(s)\in\reals[A\cd\beps,\del A\cd S]$ represents the state of the computation after running the automaton starting from state $s\in S$ and pausing after generating a string in $\del A$.

Let $\struct_\emptyset$ be the identity function. Now suppose $\struct_A$ has been defined. For $x\in\del A$, let $\struct_{A\cup\{x\}} = (x\cd\struct)\circ\struct_A$. For $p\in\reals[S]$, if $\struct_A(p)\in\reals[A\cd\beps,\del A\cd S]$ and $x\in\del A$, then
\begin{align*}
(x\cd\struct)(\struct_A(p)) &\in \reals[A\cd\beps,(\del A\setminus x)\cd S,x\cd\beps,x\Sigma\cd S]
= \reals[(A\cup\{x\})\cd\beps,\del(A\cup\{x\})\cd S],
\end{align*}
so the invariant is maintained. The order by which we build $\struct_A$ does not matter by Lemma \ref{lem:cdprops}(iv).

\begin{lemma}
\label{lem:transcomp}
Let $A$ and $B$ be finite trees.
For $x\in\del A$, $\struct_{A\cup xB} = (x\cd\struct_B)\circ\struct_A$.
\end{lemma}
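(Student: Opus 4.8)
The natural approach is induction on the finite tree $B$, with $|B|$ as the induction parameter. Two preliminary observations set things up. First, because $x\in\del A$ and $A$ is a finite tree, every proper prefix of $x$ lies in $A$, so $A\cup xB$ is again a finite tree (a prefix of any $xy$ with $y\in B$ is either a prefix of $x$, hence in $A$, or of the form $xy'$ with $y'$ a prefix of $y$, hence $y'\in B$ and $xy'\in xB$); thus $\struct_{A\cup xB}$ is defined. Second, $\eps\cd{-}$ is the identity, so $\struct_{\{\eps\}}=(\eps\cd\struct)\circ\struct_\emptyset=\struct$, which makes the assertion reduce to the defining equation $\struct_{A\cup\{x\}}=(x\cd\struct)\circ\struct_A$ in the case $B=\{\eps\}$. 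For the base case $B=\emptyset$ we have $xB=\emptyset$, so the left-hand side is $\struct_A$; and on the right $\struct_\emptyset$ is the identity while $x\cd{-}$ carries the identity to the identity, so that side is $\struct_A$ as well.

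For the inductive step, write $B=B'\cup\{w\}$ where $w$ is a prefix-maximal (leaf) element of the nonempty tree $B$; then $B'$ is a finite tree with $w\in\del B'$, and $xB=xB'\cup\{xw\}$. The key combinatorial point is that $xw\in\del(A\cup xB')$: indeed $xw\notin A$ since $A$ is prefix-closed and $x\notin A$, $xw\notin xB'$ since $w\notin B'$, and every proper prefix of $xw$ is either a proper prefix of $x$ (hence in $A$), or $x$ itself (which lies in $A\cup xB'$), or of the form $xw'$ with $w'$ a proper prefix of $w$, hence $w'\in B'$ and $xw'\in xB'$. Therefore the recursive definition of $\struct$ (which is order-independent by Lemma~\ref{lem:cdprops}(iv)) gives $\struct_{A\cup xB}=\struct_{(A\cup xB')\cup\{xw\}}=(xw\cd\struct)\circ\struct_{A\cup xB'}$, and by the induction hypothesis applied to $B'$ this equals $(xw\cd\struct)\circ(x\cd\struct_{B'})\circ\struct_A$. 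Now apply Lemma~\ref{lem:cdprops}(i) with the one-element sets $\{x\}$ and $\{w\}$ to rewrite $xw\cd\struct$ as $x\cd(w\cd\struct)$, and then Lemma~\ref{lem:cdprops}(ii), with $g=w\cd\struct$ and $f=\struct_{B'}$, to pull the action to the front: $(x\cd(w\cd\struct))\circ(x\cd\struct_{B'})=x\cd\bigl((w\cd\struct)\circ\struct_{B'}\bigr)$. Finally $(w\cd\struct)\circ\struct_{B'}=\struct_{B'\cup\{w\}}=\struct_B$ by the defining recursion, so $\struct_{A\cup xB}=(x\cd\struct_B)\circ\struct_A$, which closes the induction.

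The step I expect to require the most care is discharging the hypothesis of Lemma~\ref{lem:cdprops}(ii), namely that $\struct_{B'}$ fixes $\reals[\Sigma^*\cd\beps]$ pointwise; I would handle this by a short auxiliary induction on the tree: $\struct_\emptyset$ is the identity, $\struct$ is a $\SR$-algebra morphism so $\struct(y\cd\beps)=y\cd\struct(\beps)=y\cd\beps$, whence each partial evaluation $z\cd\struct$ also fixes $\reals[\Sigma^*\cd\beps]$ (by the middle clause of the definition of $z\cd h$), and a composite of maps with this property again has it, so every $\struct_C$ does. The remaining obligations — that $A\cup xB'$ is a finite tree, that $xw$ lies on its frontier so the recursion for $\struct$ applies, and that a nonempty finite tree always has a leaf $w$ with $B\setminus\{w\}$ prefix-closed — are elementary facts about prefix-closed sets. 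There is also a minor bookkeeping point: $x\cd\struct_{B'}$ should be read as the $\free$-endomorphism it induces so that Lemma~\ref{lem:cdprops} applies verbatim; since the only $S$-indeterminates having $x$ as a prefix that occur in the relevant images are the $x\cd s$ themselves, this identification is harmless.
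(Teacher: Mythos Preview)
Your proof is correct and follows essentially the same approach as the paper: induction on $|B|$, with the inductive step unwinding the defining recursion at the leaf $xw$ and then applying Lemma~\ref{lem:cdprops}(i) and (ii) to factor $(xw\cd\struct)\circ(x\cd\struct_{B'})$ as $x\cd\struct_B$. You supply more detail than the paper does---verifying that $A\cup xB$ is a tree, that $xw\in\del(A\cup xB')$, and that $\struct_{B'}$ fixes $\reals[\Sigma^*\cd\beps]$ so that Lemma~\ref{lem:cdprops}(ii) applies---all of which the paper leaves implicit.
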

\begin{proof}
By induction on $\len B$. For the basis $B=\emptyset$, $\struct_{A\cup x\emptyset} = \struct_A = (x\cd\struct_\emptyset)\circ\struct_A $.
For the induction step with $y\in\del B$, using Lemma \ref{lem:cdprops}(ii),
\begin{align*}
\struct_{A\cup x(B\cup\{y\})}
&= \struct_{A\cup xB\cup\{xy\})}
= (xy\cd\struct)\circ\struct_{A\cup xB}
= (xy\cd\struct)\circ(x\cd\struct_B)\circ\struct_A\\
&= (x\cd((y\cd\struct)\circ\struct_B))\circ\struct_A
= (x\cd\struct_{B\cup\{y\}})\circ\struct_A.
\tag*\qedhere
\end{align*}
\end{proof}

\begin{lemma}
\label{lem:iteratestruct}
For all finite trees $A$, $\sem- = \sem-\circ\struct_A$.
\end{lemma}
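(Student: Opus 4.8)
The plan is to induct on $|A|$, stripping one leaf from the tree at each step. The base case $A=\emptyset$ is trivial, since $\struct_\emptyset$ is the identity (inclusion), so $\sem-\circ\struct_\emptyset=\sem-$. For the inductive step I would take a nonempty finite tree $A$, pick any prefix-maximal $x\in A$, and set $A':=A\setminus\{x\}$: this $A'$ is again a finite tree with $|A'|<|A|$ and $x\in\del A'$ (every proper prefix of $x$ lies in $A'$), and because the order in which leaves are added is immaterial (Lemma~\ref{lem:cdprops}(iv)) we have $\struct_A=\struct_{A'\cup\{x\}}=(x\cd\struct)\circ\struct_{A'}$. Hence the step reduces to proving $\sem-\circ(x\cd\struct)\circ\struct_{A'}=\sem-\circ\struct_{A'}$ and then invoking the inductive hypothesis $\sem-\circ\struct_{A'}=\sem-$.

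So everything comes down to one claim: for every $x\in\Sigma^*$, $\sem-\circ(x\cd\struct)=\sem-$ as $\reals$-algebra morphisms $\free\to\SNS$. (Here $\struct$ abbreviates $\struct^\ddag:\free\to\free$ and $\sem-$ the $\SR$-algebra morphism $\sem-^\ddag:\free\to\SNS$, which satisfies the fixpoint identity $\sem-\circ\struct=\sem-$.) Since $\free$ is free as an $\reals$-algebra on generators $\SeSS$, it suffices to verify that the two morphisms agree on each generator $y$, and I would do a case analysis. If $y=u\cd\beps$, then $(x\cd\struct)(y)=\struct(u\cd\beps)=u\cd\struct(\beps)=u\cd\beps=y$, using that $\struct$ is a $\SR$-algebra morphism fixing $\beps$, so both sides equal $\sem{y}$. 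If $y=w\cd s$ and $x$ is not a prefix of $w$, then $(x\cd\struct)(y)=y$ by the definition of partial evaluation, so again both sides equal $\sem{y}$. In the remaining case $y=w\cd s$ with $w=xz$, I would unwind the partial-evaluation operator to get $(x\cd\struct)(y)=x\cd\struct(z\cd s)=x\cd(z\cd\struct(s))=w\cd\struct(s)$, the middle equality being the fact that $\struct$ commutes with the $\Sigma^*$-action, and then $\sem{(x\cd\struct)(y)}=\sem{w\cd\struct(s)}=w\cd\sem{\struct(s)}=w\cd\sem{s}=\sem{w\cd s}=\sem{y}$, using that $\sem-$ commutes with the action and that $\sem{\struct(s)}=\sem{s}$ by the fixpoint identity.

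Given the claim, the induction closes: $\sem-\circ\struct_A=\sem-\circ(x\cd\struct)\circ\struct_{A'}=\sem-\circ\struct_{A'}=\sem-$. The hard part will be the third case of the claim, specifically checking that $x\cd\struct$ sends an indeterminate $w\cd s$ with $x$ a prefix of $w$ to exactly $w\cd\struct(s)$; this uses that $\struct^\ddag$ is a morphism of $\SR$-algebras, hence commutes with the monoid action, and not merely a morphism of $\reals$-algebras, together with a careful reading of the definition of $\{x\}\cd(-)$. The $\Sigma^*\cd\beps$ generators also require separate (easy) handling, since $\struct$ fixes them. Everything else is a routine definition chase; if preferred, the leaf-by-leaf induction could instead be packaged using Lemma~\ref{lem:transcomp}.
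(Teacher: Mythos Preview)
Your argument is correct. The case analysis on generators goes through exactly as you describe: the $\Sigma^*\cd\beps$ case and the ``$x$ not a prefix'' case are immediate from the definition of $x\cd\struct$, and in the ``$x$ a prefix of $w$'' case the chain $(x\cd\struct)(w\cd s)=x\cd\struct(z\cd s)=x\cd(z\cd\struct(s))=w\cd\struct(s)$ is justified because $\struct=\struct^\ddag$ is a $\SR$-algebra morphism, hence commutes with the action and fixes $\beps$; likewise for $\sem-=\sem-^\ddag$.

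Your route differs from the paper's in one respect worth noting. You prove the global identity $\sem-\circ(x\cd\struct)=\sem-$ on all of $\free$ by checking generators, and then the leaf-by-leaf induction is immediate. The paper instead never asserts this identity on all of $\free$: it works only on the image of $\struct_A$, invoking Lemma~\ref{lem:cdprops}(iii) to replace $\sem-$ by $\del A\cd\sem-$ there, then splitting off the $x$-factor via Lemma~\ref{lem:cdprops}(iv) and collapsing $(x\cd\sem-)\circ(x\cd\struct_B)$ via Lemma~\ref{lem:cdprops}(ii). Your approach is more elementary and yields a cleaner reusable lemma (the global identity); the paper's approach is organized as a double induction through Lemma~\ref{lem:transcomp}, which keeps all the bookkeeping inside the already-established $\cd$-calculus. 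Either way the content is the same fixpoint step $\sem-\circ\struct=\sem-$ pushed under $x\cd-$.
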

\begin{proof}
By induction. It suffices to show the two sides agree on inputs $s\in S$. We have $\sem- = \sem-\circ\struct_\emptyset$, since $\struct_\emptyset$ is the identity function, and $\sem- = \sem-\circ\struct = \sem-\circ\struct_{\{\eps\}}$ as a consequence of Lemma \ref{lem:contractive}.

Now suppose the lemma holds for finite trees $A$ and $B$. For $x\in\del A$, on inputs $s\in S$,
\begin{align*}
\sem-\circ\struct_{A\cup xB}
&= \sem-\circ(x\cd\struct_B)\circ\struct_A && \text{Lemma \ref{lem:transcomp}}\\
&= (\del A\cd\sem-)\circ(x\cd\struct_B)\circ\struct_A && \text{Lemma \ref{lem:cdprops}(iii)}\\
&= ((\del A\setminus x)\cd\sem-)\circ(x\cd\sem-)\circ(x\cd\struct_B)\circ\struct_A && \text{Lemma \ref{lem:cdprops}(iv)}\\
&= ((\del A\setminus x)\cd\sem-)\circ(x\cd(\sem-\circ\struct_B))\circ\struct_A && \text{Lemma \ref{lem:cdprops}(ii)}\\
&= ((\del A\setminus x)\cd\sem-)\circ(x\cd\sem-)\circ\struct_A && \text{induction hypothesis on $B$}\\
&= (\del A\cd\sem-)\circ\struct_A && \text{Lemma \ref{lem:cdprops}(iv)}\\
&= \sem-\circ\struct_A && \text{Lemma \ref{lem:cdprops}(iii) again}\\
&= \sem- && \text{induction hypothesis on $A$}.
\tag*\qedhere
\end{align*}
\end{proof}

\subsection{Marginalization}

\emph{Marginalization} refers to the projection of a joint distribution on a product space onto its marginal distribution on a factor space. This is achieved by partial evaluation of the measure on the complementary factor. For automata, we can intuitively marginalize on a finite tree $A$ by running the automaton until reaching all strings in $\del A$ and then pruning the rest of the computation tree by outright failing. Intuitively, after an agent has scanned a string in $\del A$, no further computation of that agent can contribute to the acceptance of any string in $A$. In this section we show how to make this idea precise in the context of $\SR$-algebras.

For $\mu\in\SNS$ and $B$ a measurable subset of $\NS$, define $\eval_B(\mu) = \mu(B)$.
\begin{lemma}
\label{lem:eval}
The map $\eval_{\NS}:\SNS\to\reals$ is a $\SRAlg$ morphism, regarding $\reals$ as a $\SR$-algebra with $x\cd r = r$ and $\beps=1$.
\end{lemma}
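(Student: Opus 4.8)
The plan is to verify directly that $\eval_{\NS}$ satisfies the four requirements to be an $\SRAlg$ morphism: it is $\reals$-linear, it preserves the multiplication $\amp$, it sends the multiplicative unit $\dirac{\mset{}}$ to $1$, and it is equivariant for the $\Sigma^*$-action while sending $\beps = \dirac{\{\eps\}}$ to $\beps = 1$.

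First I would observe that $\reals$ is indeed a $\SR$-algebra with the stipulated trivial action $x\cd r = r$ and $\beps = 1$: conditions (i) and (ii) in the definition of an $M$-action hold trivially since the action is the identity map, and $\beps = 1$ is allowed (the caveat only says $\beps \ne 1$ \emph{in general}). Then $\reals$-linearity of $\eval_{\NS}$ is immediate, since $(a\mu + b\nu)(\NS) = a\mu(\NS) + b\nu(\NS)$ by definition of the pointwise vector-space structure on $\SNS$. For preservation of $\amp$, I would compute $\eval_{\NS}(\mu\amp\nu) = (\mu\amp\nu)(\NS) = (\mu\otimes\nu)(\uplus^{-1}(\NS)) = (\mu\otimes\nu)((\NS)^2) = \mu(\NS)\cdot\nu(\NS) = \eval_{\NS}(\mu)\cdot\eval_{\NS}(\nu)$, using that $\uplus^{-1}(\NS) = (\NS)^2$ and that the total mass of a product measure is the product of the total masses. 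Preservation of the unit is the computation $\eval_{\NS}(\dirac{\mset{}}) = \dirac{\mset{}}(\NS) = 1$.

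It remains to check equivariance and $\eval_{\NS}(\beps) = \beps$. The latter is $\dirac{\{\eps\}}(\NS) = 1 = \beps$ in $\reals$. For equivariance, I must show $\eval_{\NS}(x\cd\mu) = x\cd\eval_{\NS}(\mu)$, i.e.\ that $(x\cd\mu)(\NS) = \mu(\NS)$, since the action on $\reals$ is trivial. By \eqref{eq:Saction}, $x\cd\mu = \mu\circ(x\cd-)^{-1}$, so $(x\cd\mu)(\NS) = \mu((x\cd-)^{-1}(\NS)) = \mu(\NS)$, because the preimage of the whole space is the whole space. By linearity this extends from $\DNS$ to all of $\SNS$.

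The only point requiring a modicum of care — and the closest thing to an obstacle — is making sure the equivariance and $\amp$-preservation computations are valid for \emph{signed} measures, not just probability measures: the pushforward and total-mass identities above are stated in \cite{OMK25a} for $\DNS$, so I would note that each extends to $\SNS$ by writing an arbitrary signed measure as $a\mu - b\nu$ with $\mu,\nu\in\DNS$ and invoking linearity of all the operations involved. With that remark in place, all four morphism conditions are verified.
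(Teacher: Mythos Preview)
Your proof is correct and follows essentially the same approach as the paper: the same three computations (linearity by definition, multiplicativity via $\uplus^{-1}(\NS)=(\NS)^2$, and equivariance via $(x\cd-)^{-1}(\NS)=\NS$) appear verbatim. You are in fact slightly more thorough, explicitly verifying the unit, the value on $\beps$, and the extension from $\DNS$ to $\SNS$, all of which the paper leaves implicit.
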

\begin{proof}
Every $\eval_B$ is linear, as $(a\mu+b\nu)(B)=a\mu(B)+b\nu(B)$ by definition. Not every $\eval_B$ is a homomorphism with respect to multiplication, but $\eval_{\NS}$ is:
\begin{align*}
(\mu\amp\nu)(\NS)
&= (\mu\otimes\nu)(\uplus^{-1}(\NS))
= (\mu\otimes\nu)(\NS\times\NS)
= \mu(\NS)\cdot\nu(\NS).
\end{align*}
Finally, $\eval_{\NS}$ is a homomorphism with respect to the monoid action:
\begin{align*}
(x\cd\mu)(\NS)
&= \mu((x\cd-)^{-1}(\NS))
= \mu(\NS)
= x\cd\mu(\NS).
\tag*\qedhere
\end{align*}
\end{proof}

Let $\bone:\free\to\reals$ be the unique $\SR$-algebra homomorphism such that $\bone(s) = 1$ for all $s\in S$. Note that $\sem s(\NS) = 1$ for all $s\in S$, as $\sem s$ is a probability distribution, thus $\eval_{\NS}\circ\sem-$ and $\bone$ agree on $S$. Since the extension to a $\SR$-algebra morphism on domain $\free$ is unique,
\begin{align}
& \eval_{\NS}\circ\sem- = \bone : \free\to\reals.\label{eq:evalisone}
\end{align}

Let $A$ be a finite tree. For $p\in\reals[S]$, rewriting $\struct_A(p)$ in the form $\reals[\del A\cd S][A\cd\beps]$ defines a collection of polynomials $d_A^m(p)$ such that
\begin{align}
\struct_A(p)
&= \sum_{m\in\monom[A\cd\beps]} d_A^m(p)\cdot m.
\label{eq:ddef}
\end{align}
The function $d_A^m:\reals[S]\to\reals[\del A\cd S]$ for $m\in\monom[A\cd\beps]$ is uniquely defined in this way.
Since $\struct_A(p)$ is a polynomial, all but finitely many of the $d_A^m(p)$ are $0$.

Let us further define for $x\in\del A$ the unique function $d_{A,x}^m:\reals[S]\to\reals[S]$ such that
\begin{align}
((\del A\setminus x)\cd\bone)\circ d_A^m)(p) &= x\cd d_{A,x}^m(p).
\label{eq:dmdef}
\end{align}
The left-hand side is of type $\reals[x\cd S]$, and since $x\cd-$ is injective, this defines $d_{A,x}^m(p)$ uniquely. 

We remark that, although the maps $\struct_A:\reals[S]\to\reals[A\cd\beps,\del A\cd S]$ are $\reals$-algebra homomorphisms, the maps $d_{A}^m:\reals[S]\to\reals[\del A\cd S]$ and $d_{A,x}^m:\reals[S]\to\reals[S]$ are not in general. However, they do satisfy some useful algebraic properties.
\begin{lemma}
\label{lem:dna}
Let $A$ be a finite tree, $m\in\monom[A\cd\beps]$, $r\in\reals$, and $p,q\in\reals[S]$.
The following equational properties hold of $d_{A}^m$.
\begin{align*}
d_{A}^m(ap+bq) &= ad_{A}^m(p) + bd_{A}^m(q)
& d_{A}^m(pq) &= \sum_{k\ell=m} d_{A}^k(p)\cdot d_{A}^\ell(q)
& d_{A}^1(r) &= r
& d_{A}^m(r) &= 0,\ m\ne 1.
\end{align*}
Moreover, the same properties hold for the functions $d_{A,x}^m$.
\end{lemma}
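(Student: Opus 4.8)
The plan is to reduce the four identities for $d_A^m$ to the single fact that $\struct_A$ is an $\reals$-algebra homomorphism, together with the uniqueness of the representation \eqref{eq:ddef}: regarding $\reals[A\cd\beps,\del A\cd S]$ as $\reals[\del A\cd S][A\cd\beps]$, every element has a unique expansion $\sum_{m\in\monom[A\cd\beps]}c_m\cdot m$ with coefficients $c_m\in\reals[\del A\cd S]$, all but finitely many zero. For linearity, compute $\struct_A(ap+bq)$ in two ways: by \eqref{eq:ddef} it equals $\sum_m d_A^m(ap+bq)\cdot m$, and since $\struct_A$ is linear it also equals $a\struct_A(p)+b\struct_A(q) = \sum_m\big(ad_A^m(p)+bd_A^m(q)\big)\cdot m$; matching coefficients gives $d_A^m(ap+bq)=ad_A^m(p)+bd_A^m(q)$. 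For multiplicativity, expand $\struct_A(pq)=\struct_A(p)\cdot\struct_A(q)=\big(\sum_k d_A^k(p)\,k\big)\big(\sum_\ell d_A^\ell(q)\,\ell\big)=\sum_{k,\ell}d_A^k(p)d_A^\ell(q)\,(k\ell)$ and regroup this finite double sum by the product monomial $m=k\ell$ (legitimate because $\monom[A\cd\beps]$ is the free commutative monoid on $A\cd\beps$, so $k\ell\in\monom[A\cd\beps]$ and each $m$ has only finitely many such factorizations), obtaining $\sum_m\big(\sum_{k\ell=m}d_A^k(p)d_A^\ell(q)\big)m$; uniqueness of the representation then yields the Cauchy-product formula. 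For the constant identities, use that an $\reals$-algebra homomorphism sends the constant polynomial $r$ to $r\cdot 1$, with $1$ the empty monomial; hence in \eqref{eq:ddef} the only nonzero coefficient of $\struct_A(r)$ is $d_A^1(r)=r$, and $d_A^m(r)=0$ for $m\ne 1$.

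For the functions $d_{A,x}^m$, I would observe that they all arise from the $d_A^m$ by post-composition with a single $\reals$-algebra homomorphism, so the four identities transfer automatically. Indeed, by \eqref{eq:dmdef} and the remark after it, $((\del A\setminus x)\cd\bone)(d_A^m(p))$ lies in $\reals[x\cd S]$ for every $p$; since $x\cd-$ is an injective $\reals$-algebra homomorphism that maps $\reals[S]$ onto $\reals[x\cd S]$ (properties (ii) and (iii) of the $\Sigma^*$-action on $\free$), it restricts to an isomorphism $\reals[S]\to\reals[x\cd S]$, and writing $\psi$ for the restriction to $\reals[\del A\cd S]$ of the composite $\reals$-algebra homomorphism $(x\cd-)^{-1}\circ\big((\del A\setminus x)\cd\bone\big)$, we have $d_{A,x}^m=\psi\circ d_A^m$. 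Since $\psi$ preserves $0$, sums, products, and $\reals$-scalar multiples — in particular $\psi(r)=r$ for $r\in\reals$ — applying $\psi$ to each of the four identities already proved for $d_A^m$ produces the corresponding identity for $d_{A,x}^m$; for instance $\psi\big(d_A^m(pq)\big)=\psi\big(\sum_{k\ell=m}d_A^k(p)d_A^\ell(q)\big)=\sum_{k\ell=m}\psi(d_A^k(p))\,\psi(d_A^\ell(q))=\sum_{k\ell=m}d_{A,x}^k(p)\,d_{A,x}^\ell(q)$. One could equally well avoid naming $(x\cd-)^{-1}$ by applying $x\cd-$ to both sides of each target identity and cancelling it by injectivity, using \eqref{eq:dmdef} and that $(\del A\setminus x)\cd\bone$ and $x\cd-$ are homomorphisms.

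The calculations are routine; the only places that warrant care are (a) the bookkeeping in the multiplicativity step, namely re-indexing the double sum by the product monomial and noting that it remains finite, and (b) confirming that the map $\psi$ used in the second part is a well-defined $\reals$-algebra homomorphism — equivalently, that $(\del A\setminus x)\cd\bone$ really does send $\reals[\del A\cd S]$ into the image $\reals[x\cd S]$ of $x\cd-$. This last point relies on prefix-incomparability of $\del A$: each indeterminate $x'\cd s$ with $x'\in\del A\setminus x$ is sent to the constant $\bone(s)=1$, while the indeterminates $x\cd s$ have no prefix in $\del A\setminus x$ and are left untouched. Neither point is a genuine obstacle, since both facts are already implicit in the surrounding development.
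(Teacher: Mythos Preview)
Your proposal is correct and follows essentially the same approach as the paper: for the $d_A^m$ identities, both arguments compute $\struct_A$ of the relevant input two ways and compare coefficients in $\reals[\del A\cd S][A\cd\beps]$, and for $d_{A,x}^m$ both transfer the identities through the homomorphisms $(\del A\setminus x)\cd\bone$ and $x\cd-$ plus injectivity of the latter. Your packaging of this last step via a single named homomorphism $\psi=(x\cd-)^{-1}\circ((\del A\setminus x)\cd\bone)$ is slightly cleaner than the paper's explicit unwinding, but it is the same argument, and you yourself note the paper's formulation as an alternative.
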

\begin{proof}
For linearity, we have
\begin{align*}
\struct_A(ap+bq) &= \sum_{m\in\monom[A\cd\beps]} d_A^m(ap+bq)\cdot m,
\end{align*}
but since $\struct_A$ is an $\reals$-algebra homomorphism, we also have
\begin{align*}
\struct_A(ap+bq) = a\struct_A(p) + b\struct_A(q)
&= a\cdot\sum_{m\in\monom[A\cd\beps]} d_{A}^m(p)\cdot m + b\cdot\sum_{m\in\monom[A\cd\beps]} d_{A}^m(q)\cdot m\\
&= \sum_{m\in\monom[A\cd\beps]} (ad_{A}^m(p) + bd_{A}^m(q))\cdot m,
\end{align*}
therefore $d_{A}^m(ap+bq) = ad_{A}^m(p) + bd_{A}^m(q)$.

For products, again we have
\begin{align*}
\struct_A(pq) = \sum_{m\in\monom[A\cd\beps]} d_{A}^m(pq)\cdot m
\end{align*}
and
\begin{align*}
\struct_A(pq) &= \struct_A(p)\cdot \struct_A(q)
= (\sum_{k\in\monom[A\cd\beps]} d_{A}^k(p)\cdot k)\cdot(\sum_{\ell\in\monom[A\cd\beps]} d_{A}^\ell(q)\cdot\ell)\\
&= \sum_{k\in\monom[A\cd\beps]}\sum_{\ell\in\monom[A\cd\beps]} d_{A}^k(p)\cdot d_{A}^\ell(q)\cdot k\ell
= \sum_{m\in\monom[A\cd\beps]}\sum_{k\ell=m} d_{A}^k(p)\cdot d_{A}^\ell(q)\cdot m\\
&= \sum_{m\in\monom[A\cd\beps]}(\sum_{k\ell=m} d_{A}^k(p)\cdot d_{A}^\ell(q))\cdot m,
\end{align*}
therefore $d_{A}^m(pq) = \sum_{k\ell=m} d_{A}^k(p)\cdot d_{A}^\ell(q)$.

For scalars $r\in\reals$, we have $r = d_{A}(r) = \sum_{m\in\monom[A\cd\beps]} d_{A}^m(r)\cdot m$, so $d_{A}^m(r)=0$ for $m\ne 1$ and $d_{A}^1(r)=r$.

The same equational properties are easily seen to hold also for $d_{A,X}^m$, since $(\del A\setminus x)\cd\bone$ and $x\cd-$ behave homomorphically and $x\cd-$ is injective. For example, the argument for products is
\begin{align*}
x\cd d_{A,x}^m(pq)
&= ((\del A\setminus x)\cd\bone)(d_{A}^m(pq))
= ((\del A\setminus x)\cd\bone)(\sum_{k\ell=m} d_{A}^k(p)\cdot d_A^\ell(q))\\
&= \sum_{k\ell=m} ((\del A\setminus x)\cd\bone)(d_{A}^k(p))\cdot((\del A\setminus x)\cd\bone)(d_A^\ell(q))\\
&= \sum_{k\ell=m} (x\cd d_{A,x}^k(p))\cdot(x\cd d_{A,x}^\ell(q))
= x\cd\sum_{k\ell=m} d_{A,x}^k(p)\cdot d_{A,x}^\ell(q),
\end{align*}
therefore $d_{A,x}^m(pq) = \sum_{k\ell=m} d_{A,x}^k(p)\cdot d_{A,x}^\ell(q)$ by injectivity.
\end{proof}

\begin{lemma}
\label{lem:cc}
For $A,B$ finite trees, $x\in\del A$, $y\in\del B$, $m\in\monom[A\cd\beps]$, and $\ell\in\monom[B\cd\beps]$,
\begin{align*}
d_{A\cup xB,xy}^{(x\cd\ell)\cdot m} = d_{B,y}^\ell\circ d_{A,x}^m.
\end{align*}
\end{lemma}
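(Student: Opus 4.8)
The plan is to reduce both sides to the extended structure map $\struct_{A\cup xB}$ via Lemma~\ref{lem:transcomp}, and then chase coefficients through the two nested ``$\reals[\text{states}][\text{$\beps$-monomials}]$'' expansions that define the maps $d$. First I would record the routine facts that $A\cup xB$ is a finite tree with $\del(A\cup xB) = (\del A\setminus x)\cup x\del B$ and $xy\in\del(A\cup xB)$, and that, since $x\notin A$ forces $A\cap xB=\emptyset$, every monomial in $\monom[(A\cup xB)\cd\beps]$ factors uniquely as $m'\cdot(x\cd\ell')$ with $m'\in\monom[A\cd\beps]$, $\ell'\in\monom[B\cd\beps]$. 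Then for $p\in\reals[S]$, Lemma~\ref{lem:transcomp} together with the definition \eqref{eq:ddef} of the $d_A^{m'}$ gives
\begin{align*}
\struct_{A\cup xB}(p) = (x\cd\struct_B)(\struct_A(p)) = \sum_{m'\in\monom[A\cd\beps]} (x\cd\struct_B)(d_A^{m'}(p))\cdot m',
\end{align*}
since $x\cd\struct_B$ is an $\reals$-algebra homomorphism fixing $\reals[\Sigma^*\cd\beps]$ and hence each $m'$. Comparing this with \eqref{eq:ddef} for $A\cup xB$ and using the unique factorization of monomials identifies, for each $m'$, the polynomial $(x\cd\struct_B)(d_A^{m'}(p))\in\reals[\del(A\cup xB)\cd S,\,xB\cd\beps]$ with $\sum_{\ell'}d_{A\cup xB}^{(x\cd\ell')m'}(p)\cdot(x\cd\ell')$.

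Next I would apply the $\reals$-algebra endomorphism $e$ of $\free$ that substitutes $1$ for each indeterminate $(\del A\setminus x)\cd s$ and fixes all other generators. A generator check on $\reals[\del A\cd S]$ shows $e\circ(x\cd\struct_B) = (x\cd\struct_B)\circ\bigl((\del A\setminus x)\cd\bone\bigr)$ there; the only nontrivial point is that $e$ fixes each $x\cd\struct_B(s)\in\reals[xB\cd\beps,\,x\del B\cd S]$, whose indeterminates are disjoint from $(\del A\setminus x)\cd S$. Applying $e$ to the identity above, and using this commutation, the defining equation \eqref{eq:dmdef} of $d_{A,x}^{m'}$, the identity $(x\cd\struct_B)(x\cd q) = x\cd\struct_B(q)$ (compare on generators), and the expansion \eqref{eq:ddef} of $\struct_B$ pushed through the homomorphism $x\cd-$, while noting that $e$ fixes the $\beps$-monomials $x\cd\ell'$ and agrees with $(\del A\setminus x)\cd\bone$ on the $\beps$-free polynomials $d_{A\cup xB}^{(x\cd\ell')m'}(p)$, the identity rearranges to
\begin{align*}
\sum_{\ell'}\bigl((\del A\setminus x)\cd\bone\bigr)(d_{A\cup xB}^{(x\cd\ell')m'}(p))\cdot(x\cd\ell') = \sum_{\ell'}\bigl(x\cd d_B^{\ell'}(d_{A,x}^{m'}(p))\bigr)\cdot(x\cd\ell').
\end{align*}
Comparing coefficients of $x\cd\ell'$ and specializing to $m'=m$, $\ell'=\ell$ yields $\bigl((\del A\setminus x)\cd\bone\bigr)(d_{A\cup xB}^{(x\cd\ell)m}(p)) = x\cd d_B^{\ell}(d_{A,x}^{m}(p))$.

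Finally I would apply $(x\del B\setminus xy)\cd\bone$ to both sides of this. On the left, Lemma~\ref{lem:cdprops}(iv) (with $f=g=\bone$, and with the two sets $\del A\setminus x$ and $x\del B\setminus xy$, which are disjoint with prefix-incomparable union) collapses the two partial evaluations to $(\del(A\cup xB)\setminus xy)\cd\bone$, which by \eqref{eq:dmdef} turns $d_{A\cup xB}^{(x\cd\ell)m}(p)$ into $xy\cd d_{A\cup xB,xy}^{(x\cd\ell)m}(p)$. On the right, $x\cd-$ maps $\reals[\del B\cd S]$ isomorphically onto $\reals[x\del B\cd S]$ and there satisfies $\bigl((x\del B\setminus xy)\cd\bone\bigr)(x\cd q) = x\cd\bigl(((\del B\setminus y)\cd\bone)(q)\bigr)$ (another generator check), so the right side becomes $x\cd\bigl(((\del B\setminus y)\cd\bone)(d_B^{\ell}(d_{A,x}^{m}(p)))\bigr)$, which by \eqref{eq:dmdef} applied to $d_{A,x}^m(p)\in\reals[S]$ equals $x\cd\bigl(y\cd d_{B,y}^{\ell}(d_{A,x}^m(p))\bigr) = xy\cd d_{B,y}^{\ell}(d_{A,x}^m(p))$ by the monoid-action law. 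Since $xy\cd-$ is injective on $\reals[S]$, we conclude $d_{A\cup xB,xy}^{(x\cd\ell)m}(p) = (d_{B,y}^{\ell}\circ d_{A,x}^m)(p)$, and as $p$ was arbitrary, the lemma follows. The step I expect to be the main obstacle is the interaction of the $\bone$-substitutions with the $\beps$-indeterminates: $(\del A\setminus x)\cd\bone$ collapses \emph{every} $\beps$-monomial to $1$, so it may not be applied until after the relevant $xB\cd\beps$-coefficient has been isolated; this forces the two-stage coefficient extraction (first over $\monom[A\cd\beps]$, then over $\monom[xB\cd\beps]$) and the detour through the auxiliary substitution $e$, which touches only the $(\del A\setminus x)\cd S$ variables and therefore commutes past $x\cd\struct_B$. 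Once that bookkeeping is in place, what remains is routine generator-level verification together with the monoid-action laws, Lemma~\ref{lem:cdprops}, and injectivity of $xy\cd-$.
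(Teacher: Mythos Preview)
Your argument is correct and follows the paper's strategy closely: both factor $\struct_{A\cup xB}$ via Lemma~\ref{lem:transcomp}, push the $\bone$-marginalizations through, expand using \eqref{eq:ddef} and \eqref{eq:dmdef}, and finish by comparing coefficients and invoking injectivity of $xy\cd-$. The one genuine difference is organizational. The paper applies the full substitution $(\del(A\cup xB)\setminus xy)\cd\bone$ to $\struct_{A\cup xB}(p)$ in one shot and then reads off the coefficient of each $(x\cd\ell)\cdot m$, tacitly treating the $\beps$-monomials as fixed by that map. You instead do a two-stage coefficient extraction---first over $\monom[A\cd\beps]$, then over $x\cd\monom[B\cd\beps]$---and introduce the auxiliary endomorphism $e$ (which touches only the $(\del A\setminus x)\cd S$ variables) precisely so that the $xB\cd\beps$-indeterminates survive until the second extraction. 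Your caution here is well placed: under the literal definition of $B\cd h$ in \S\ref{sec:topoly}, the second clause sends every $y\in\Sigma^*\cd\beps$ to $h(y)$, which for $h=\bone$ is $1$, so the monomials would collapse before the coefficient comparison. Your detour through $e$ cleanly sidesteps this; the rest of the argument---the generator checks, the use of Lemma~\ref{lem:cdprops}(iv), and the appeal to injectivity---mirrors the paper's computations line for line.
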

\begin{proof}
Since $\del(A\cup xB) = (\del A\setminus x)\cup x\del B$ and the elements are prefix-incomparable, by Lemma \ref{lem:cdprops}(iv) and (i) we have
\begin{align*}
(\del(A\cup xB)\setminus xy)\cd\bone
&= ((x\del B\setminus xy)\cd\bone)\circ((\del A\setminus x)\cd\bone)
= (x\cd((\del B\setminus y)\cd\bone)\circ((\del A\setminus x)\cd\bone).
\end{align*}
By Lemmas \ref{lem:transcomp} and \ref{lem:cdprops}(iv),
\begin{align}
((\del(A\cup xB)\setminus xy)\cd\bone)\circ\struct_{A\cup xB}
&= (x\cd((\del B\setminus y)\cd\bone)\circ((\del A\setminus x)\cd\bone)\circ(x\cd\struct_B)\circ\struct_A\nonumber\\
&= (x\cd((\del B\setminus y)\cd\bone)\circ(x\cd\struct_B)\circ((\del A\setminus x)\cd\bone)\circ\struct_A.\label{eq:plain}
\end{align}
Now let us apply both sides to an arbitrary $p\in\reals[S]$. By \eqref{eq:ddef} and \eqref{eq:dmdef}, the left-hand side gives
\begin{align*}
& ((\del(A\cup xB)\setminus xy)\cd\bone)(\struct_{A\cup xB}(p))\\
&= ((\del(A\cup xB)\setminus xy)\cd\bone)(\sum_{k\in\monom[(A\cup xB)\cd\beps]} d_{A\cup xB}^k(p)\cdot k)\\
&= \sum_{k\in\monom[(A\cup xB)\cd\beps]} ((\del(A\cup xB)\setminus xy)\cd\bone)(d_{A\cup xB}^k(p))\cdot((\del(A\cup xB)\setminus xy)\cd\bone)(k)\\
&= \sum_{k\in\monom[(A\cup xB)\cd\beps]} (xy\cd d_{A\cup xB,xy}^k(p))\cdot k.
\end{align*}
Since every monomial $k\in\monom[(A\cup xB)\cd\beps]$ can be uniquely factored as $k = (x\cd\ell)\cdot m$ for $m\in\monom[A\cd\beps]$ and $\ell\in\monom[B\cd\beps]$, this is equal to
\begin{align}
\sum_{m\in\monom[A\cd\beps]} \sum_{\ell\in\monom[B\cd\beps]} (xy\cd d_{A\cup xB,xy}^{(x\cd\ell)\cdot m}(p))\cdot(x\cd\ell)\cdot m.\label{eq:plainA}
\end{align}

For the right-hand side of \eqref{eq:plain}, we apply the functions from right to left. By \eqref{eq:ddef} and \eqref{eq:dmdef}, 
\begin{align*}
((\del A\setminus x)\cd\bone)(\struct_A(p))
&= ((\del A\setminus x)\cd\bone)(\sum_{m\in\monom[A\cd\beps]} d_A^m(p)\cdot m)\\
&= \sum_{m\in\monom[A\cd\beps]} ((\del A\setminus x)\cd\bone)(d_A^m(p))\cdot ((\del A\setminus x)\cd\bone)(m)\\
&= \sum_{m\in\monom[A\cd\beps]} (x\cd d_{A,x}^m(p))\cdot m.
\end{align*}
Now applying $x\cd\struct_B$ to this, by \eqref{eq:ddef},
\begin{align*}
& (x\cd\struct_B)(\sum_{m\in\monom[A\cd\beps]} (x\cd d_{A,x}^m(p))\cdot m)\\
&= \sum_{m\in\monom[A\cd\beps]} (x\cd\struct_B)(x\cd d_{A,x}^m(p))\cdot (x\cd\struct_B)(m)\\
&= \sum_{m\in\monom[A\cd\beps]} (x\cd \struct_B(d_{A,x}^m(p)))\cdot m\\
&= \sum_{m\in\monom[A\cd\beps]} (x\cd \sum_{\ell\in\monom[B\cd\beps]} d_B^\ell(d_{A,x}^m(p))\cdot\ell)\cdot m\\
&= \sum_{m\in\monom[A\cd\beps]} \sum_{\ell\in\monom[B\cd\beps]} (x\cd d_B^\ell(d_{A,x}^m(p)))\cdot(x\cd\ell)\cdot m.
\end{align*}
Finally, applying $x\cd((\del B\setminus y)\cd\bone)$, by \eqref{eq:dmdef},
\begin{align}
& (x\cd((\del B\setminus y)\cd\bone))(\sum_{m\in\monom[A\cd\beps]} \sum_{\ell\in\monom[B\cd\beps]} (x\cd d_B^\ell(d_{A,x}^m(p)))\cdot(x\cd\ell)\cdot m)\nonumber\\
&= \sum_{m\in\monom[A\cd\beps]} \sum_{\ell\in\monom[B\cd\beps]} (x\cd((\del B\setminus y)\cd\bone))(x\cd d_B^\ell(d_{A,x}^m(p)))\cdot(x\cd((\del B\setminus y)\cd\bone))((x\cd\ell)\cdot m)\nonumber\\
&= \sum_{m\in\monom[A\cd\beps]} \sum_{\ell\in\monom[B\cd\beps]} (x\cd ((\del B\setminus y)\cd\bone)(d_B^\ell(d_{A,x}^m(p))))\cdot(x\cd\ell)\cdot m\nonumber\\
&= \sum_{m\in\monom[A\cd\beps]} \sum_{\ell\in\monom[B\cd\beps]} (x\cd (y\cd d_{B,y}^\ell(d_{A,x}^m(p))))\cdot(x\cd\ell)\cdot m\nonumber\\
&= \sum_{m\in\monom[A\cd\beps]} \sum_{\ell\in\monom[B\cd\beps]} (xy\cd d_{B,y}^\ell(d_{A,x}^m(p)))\cdot(x\cd\ell)\cdot m.
\label{eq:plainB}
\end{align}

Now the equality \eqref{eq:plain} implies the equality of \eqref{eq:plainA} and \eqref{eq:plainB}, thus
\begin{align*}
xy\cd d_{A\cup xB,xy}^{(x\cd\ell)\cdot m}(p)
&= xy\cd d_{B,y}^\ell(d_{A,x}^m(p)).
\end{align*}
Since the monoid action is injective and $p$ was arbitrary, the result follows.
\end{proof}

\section{Criterion for Behavioral Equivalence}

The following lemma and theorem establish necessary and sufficient conditions for behavioral equivalence.

Recalling the bijective correspondence between monomials $\monom[A\cd\beps]$ and multisets $\naturals^A$, the monomial corresponding to the multiset $\alpha\in\naturals^A$ is
\begin{align}
m(\alpha) &= \prod_{y\in A} (y\cd\beps)^{\alpha(y)}.\label{eq:monommultiset}
\end{align}

\begin{lemma}
\label{lem:Adetermined}
Let $A$ be a finite tree. Let $\beta\in\naturals^A$ and let $m(\beta)\in\monom[A\cd\beps]$ be the corresponding monomial. Let $d_A^m(s)$ be defined as in \eqref{eq:ddef}. Then $\bone(d_A^{m(\beta)}(s)) = \sem s([\beta]_A)$.
\end{lemma}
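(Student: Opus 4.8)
The plan is to prove this by induction on the tree $A$, mirroring the inductive definition of $\struct_A$ and using Lemma~\ref{lem:cc} to handle the induction step cleanly. The key insight is that the quantity $\bone(d_A^{m(\beta)}(s))$ should be thought of as follows: $\struct_A(s)$ records, as a polynomial in the $\del A\cd S$ indeterminates with coefficients that are polynomials in the $A\cd\beps$ indeterminates, the full state of the computation after reaching $\del A$; the coefficient $d_A^{m(\beta)}(s)$ of the monomial $m(\beta)$ captures exactly the ways of producing the multiset $\beta$ of accepted strings (within $A$), and collapsing the surviving-agent indeterminates via $\bone$ (i.e.\ setting each $y\cd s\mapsto 1$) sums up the total probability mass of all such outcomes, which is precisely $\sem s([\beta]_A)$.

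For the base case $A=\emptyset$, we have $\struct_\emptyset = \mathrm{id}$, so $\struct_\emptyset(s) = s = d_\emptyset^1(s)\cdot 1$, giving $d_\emptyset^{m(\beta)}(s) = s$ when $\beta$ is the empty multiset (the only element of $\naturals^\emptyset$, corresponding to $m(\beta)=1$) and $0$ otherwise; then $\bone(d_\emptyset^1(s)) = \bone(s) = 1 = \sem s(\NS) = \sem s([\beta]_\emptyset)$, using $\del\emptyset=\{\eps\}$, $[\beta]_\emptyset = \NS$, and \eqref{eq:evalisone}. For the induction step, I would write $A\cup\{x\}$ with $x\in\del A$ as $A\cup x\emptyset'$ where... more precisely, I would use Lemma~\ref{lem:cc} in the form $d_{A\cup\{x\},x}^{(x\cd 1)\cdot m} = d_{\{\eps\}_{?},\cdot}$... — actually the cleanest route is to take $B=\{\eps\}$, $y=\eps\in\del\emptyset$? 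No: the right decomposition is to grow $A$ one node at a time. Let me instead argue as follows. Assume the claim for $A$; let $x\in\del A$ and consider $A' = A\cup\{x\}$. By the defining recursion $\struct_{A'} = (x\cd\struct)\circ\struct_A$, and by Lemma~\ref{lem:cc} with $B=\{\eps\}$, $y=\eps$ (so $\del B = \Sigma$, $\monom[B\cd\beps]=\{1\}$), we get $d_{A',x}^{(x\cd 1)\cdot m} = d_{\{\eps\},\eps}^{1}\circ d_{A,x}^m$; but $d_{\{\eps\}}^{1}$ is extracted from $\struct_{\{\eps\}} = \struct$, i.e.\ $d_{\{\eps\}}^{m'}(p)$ is the coefficient of the $\beps$-monomial $m'$ in $\struct(p)\in\reals[\beps,\Sigma\cd S]$. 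Combining the coefficient expansions, a monomial $k\in\monom[A'\cd\beps]$ factors uniquely as $(x\cd\beps)^j\cdot m$ with $m\in\monom[A\cd\beps]$, and $d_{A'}^{(x\cd\beps)^j\cdot m}(s)$ equals the result of taking $d_A^m(s)$, expanding each surviving indeterminate $x\cd t$ (for $t\in S$) via $x\cd\struct(t)$, and collecting the coefficient of $(x\cd\beps)^j$.

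The main step is then to show $\bone(d_{A'}^{m(\beta')}(s)) = \sem s([\beta']_{A'})$ for $\beta'\in\naturals^{A'}$, writing $\beta' = \beta + j\cdot\mset{x}$ with $\beta\in\naturals^A$ and $j = \beta'(x)$. Applying $\bone$ commutes with the coefficient-extraction in a controlled way: $\bone(d_{A'}^{m(\beta')}(s))$ equals the coefficient of $(x\cd\beps)^j$ in $\bone'$ applied to $d_A^m(s)$ with $x\cd t$ replaced by $x\cd\struct(t)$, where $\bone'$ sets the genuine survivors to $1$ but retains $x\cd\beps$. Using the semantic recursion $\sem- = \sem-\circ\struct_A$ (Lemma~\ref{lem:iteratestruct}), the probabilistic formulas \eqref{eq:blah2}, \eqref{eq:blah3} for how $\amp$ and $x\cd-$ act on the $[\,\cdot\,]_n$-measures, and the induction hypothesis on $A$, one unwinds $\sem s([\beta']_{A'})$ as a sum over ways of distributing the accept-events — $\beta$ worth coming "before" reaching $x$ and $j$ coming "at" $x$ — and matches it term-by-term with the combinatorial expansion of $\bone(d_{A'}^{m(\beta')}(s))$. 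The hardest part will be the bookkeeping that identifies these two expansions: on the algebra side we are extracting polynomial coefficients after substitution, and on the measure side we are decomposing $[\beta']_{A'}$-mass through the automaton step at $x$; making the indexing match requires carefully using the factorization of $A'\cd\beps$-monomials and the injectivity of $x\cd-$ to strip the $x$-prefix, exactly as in the proof of Lemma~\ref{lem:cc}. Once the indexing is aligned, each factor is handled by the corresponding case of Lemma~\ref{lem:dna} (to distribute $d$ over products and scalars) together with the base instance $\bone(d_{\{\eps\}}^{m'}(s)) = \sem s([\,\cdot\,]_{\{\eps\}})$, which is itself just the $A=\{\eps\}$ case proved directly from $\struct = \struct_{\{\eps\}}$ and \eqref{eq:semantics}.
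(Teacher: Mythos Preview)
Your approach is genuinely different from the paper's, which does \emph{not} induct on $A$. The paper gives a direct argument: first compute $\sem{m(\alpha)}=\dirac\alpha$, then apply Lemma~\ref{lem:iteratestruct} \emph{once}, for the given $A$, to obtain
\[
\sem s \;=\; \sem{\struct_A(s)} \;=\; \sum_{\alpha\in\naturals^A}\sem{d_A^{m(\alpha)}(s)}\amp\dirac\alpha,
\]
and evaluate at $[\beta]_A$ via \eqref{eq:blah1}. The decisive observation is a support argument: since $d_A^{m(\alpha)}(s)\in\reals[\del A\cd S]$, every atom of $\sem{d_A^{m(\alpha)}(s)}$ lies in $[0]_A$, so in the double sum only $\gamma=0$ and $\zeta=\alpha=\beta$ survive, leaving $\sem{d_A^{m(\beta)}(s)}(\NS)=\bone(d_A^{m(\beta)}(s))$ by \eqref{eq:evalisone}. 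All of the tree induction has already been absorbed into Lemma~\ref{lem:iteratestruct}; Lemmas~\ref{lem:cc} and~\ref{lem:dna} are not used here at all.

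Your inductive route has a real gap. The step you label ``the hardest part'' is not carried out, and as you have set it up the induction hypothesis is too weak: the statement for $A$ gives you only the total mass $\bone(d_A^{m(\beta)}(s))=\sem{d_A^{m(\beta)}(s)}(\NS)$, whereas refining from $[\beta]_A$ to $[\beta']_{A'}$ with $\beta'(x)=j$ requires the finer value $\sem{d_A^{m(\beta)}(s)}$ on the event ``exactly $j$ occurrences of $x$'', which the hypothesis does not supply. You also invoke Lemma~\ref{lem:iteratestruct} inside the induction step; once that lemma is on the table, the paper's direct argument is already available and the surrounding induction, the coefficient-extraction via Lemma~\ref{lem:cc}, and the bookkeeping via Lemma~\ref{lem:dna} all become unnecessary. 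The idea you are missing is precisely the one-line support observation above, which collapses the entire computation without any recursion on $A$.
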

\begin{proof}
Using the properties of $\amp$ and $y\cd-$ in $\SNS$, we have
\begin{align*}
\sem{m(\alpha)} = \sem{\prod_{y\in A} (y\cd\beps)^{\alpha(y)}}
&= \bigamp_{y\in A} \sem{(y\cd\beps)^{\alpha(y)}}
= \bigamp_{y\in A} \bigamp_{i=1}^{\alpha(y)} y\cd\dirac{\{\eps\}}
= \bigamp_{y\in A} \bigamp_{i=1}^{\alpha(y)} \dirac{\{y\}}
= \dirac\alpha.
\end{align*}
By Lemma \ref{lem:iteratestruct},
\begin{align*}
\sem s &= \sem{\struct_A(s)}
= \sem{\sum_{\alpha\in\naturals^A} d_{A}^{m(\alpha)}(s)\cdot m(\alpha)}
= \sum_{\alpha\in\naturals^A} \sem{d_{A}^{m(\alpha)}(s)}\amp\dirac\alpha.
\end{align*}
Applying this to $[\beta]_A$ and using \eqref{eq:blah1},
\begin{align}
\sem s([\beta]_A)
&= \sum_{\alpha\in\naturals^A} (\sem{d_{A}^{m(\alpha)}(s)}\amp\dirac\alpha)([\beta]_A)
= \sum_{\alpha\in\naturals^A} \sum_{\gamma\uplus\zeta=\beta}\sem{d_{A}^{m(\alpha)}(s)}([\gamma]_A)\cdot\dirac\alpha([\zeta]_A).\label{eq:Adetermined1}
\end{align}
All indeterminates of $d_{A}^{m(\alpha)}(s)$ are of the form $x\cd s$ for some $x\in\del A$, which says that all indeterminates of $F\sem-(d_{A}^{m(\alpha)}(s))$ are of the form $x\cd\sem s$ for some $x = \del A$. This implies that the support of $\sem{d_{A}^{m(\alpha)}(s)}$ is contained in $[0]_A$, so $\sem{d_{A}^{m(\alpha)}(s)}([\gamma]_A)$ vanishes unless $\gamma=0$, in which case $\sem{d_{A}^{m(\alpha)}(s)}([0]_n)=\sem{d_{A}^{m(\alpha)}(s)}(\NS)$. Moreover, $\dirac\alpha([\zeta]_A)$ vanishes unless $\alpha=\zeta$, in which case it equals $1$. From these facts it follows that \eqref{eq:Adetermined1} reduces to $\sem{d_{A}^{m(\beta)}(s)}(\NS)$. By \eqref{eq:evalisone}, this is $\bone(\sem{d_{A}^{m(\beta)}}(s))$.
\end{proof}

\begin{theorem}
\label{thm:necandsuf}
For an automaton with state set $S$ and transition function $\struct:S\to\reals[\beps,\Sigma\cd S]$, for $\mu,\nu\in\SNs$, let $p=\phi_S^{-1}(\mu-\nu)\in\reals[S]$. Let $d_{A}^m$ and $d_{A,x}^m$ be defined as in \eqref{eq:ddef} and \eqref{eq:dmdef}. Then $\sem\mu=\sem\nu$ if and only if for all finite trees $A\subs\Sigma^*$, all $x\in\del A$, and all $m\in\monom[A\cd\beps]$,
\begin{align}
d_{A,x}^m(p) \in \ker\bone.
\label{eq:necandsuf}
\end{align}
\end{theorem}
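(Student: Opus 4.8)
The plan is to reduce the whole statement to a condition on the single polynomial $p$ and then relate the two families of maps $d_A^m$ and $d_{A,x}^m$ through $\bone$. First, since $\phi_S$ and $\sem-$ (I write $\sem-$ for $\sem-^\ddag$) are $\reals$-linear, $\sem\mu-\sem\nu=\sem{\phi_S^{-1}(\mu-\nu)}=\sem p$, so it is enough to show that $\sem p=0$ if and only if \eqref{eq:necandsuf} holds. Next I would observe that Lemma~\ref{lem:Adetermined} holds with an arbitrary $p\in\reals[S]$ in place of $s\in S$: its proof only invokes that $\sem-\circ\struct_A$ is an $\reals$-algebra morphism agreeing with $\sem-$ on $S$ (Lemma~\ref{lem:iteratestruct}), the $\reals$-algebra identities for the $d_A^m$-decomposition (Lemma~\ref{lem:dna}, \eqref{eq:ddef}), and the fact that every indeterminate of $d_A^m(p)\in\reals[\del A\cd S]$ has the form $x\cd s$ with $x\in\del A$, which keeps the support of $\sem{d_A^m(p)}$ inside $[0]_A$. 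So I will use the identity
\begin{align*}
(\sem p)([\beta]_A)=\bone(d_A^{m(\beta)}(p))\qquad\text{for every finite tree }A,\ \beta\in\naturals^A,\ p\in\reals[S].
\end{align*}

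The crux is the claim that $\bone\circ d_{A,x}^m=\bone\circ d_A^m$ as maps $\reals[S]\to\reals$, for every finite tree $A$, $x\in\del A$, and $m\in\monom[A\cd\beps]$. To prove it I would apply $\bone$ to both sides of the defining equation \eqref{eq:dmdef}, $((\del A\setminus x)\cd\bone)(d_A^m(p))=x\cd d_{A,x}^m(p)$. The right-hand side gives $\bone(x\cd d_{A,x}^m(p))=\bone(d_{A,x}^m(p))$ since $\bone$ is a $\SR$-algebra morphism into $\reals$ and the monoid action on $\reals$ is trivial. For the left-hand side, $\bone\circ((\del A\setminus x)\cd\bone)$ and $\bone$ are $\reals$-algebra homomorphisms on $\reals[\del A\cd S]$, and they agree on each generator $w\cd s$ with $w\in\del A$: if $w\ne x$ then $(\del A\setminus x)\cd\bone$ sends $w\cd s$ to $w\cd\bone(s)=1$, while if $w=x$ prefix-incomparability of $\del A$ forces it to fix $x\cd s$; either way the $\bone$-image is $1=\bone(w\cd s)$. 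Hence $\bone(d_A^m(p))=\bone(((\del A\setminus x)\cd\bone)(d_A^m(p)))=\bone(d_{A,x}^m(p))$.

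Given these two facts the theorem is immediate. Forward: if $\sem p=0$, then for any $A$, $x\in\del A$, and $m=m(\beta)\in\monom[A\cd\beps]$ we get $\bone(d_{A,x}^m(p))=\bone(d_A^{m(\beta)}(p))=(\sem p)([\beta]_A)=0$, so $d_{A,x}^m(p)\in\ker\bone$. Converse: if \eqref{eq:necandsuf} holds, then for any $n$ and $\alpha\in\NS$, taking $\beta=\alpha\rest\Sigma^{<n}$ and any $x\in\del\Sigma^{<n}=\Sigma^n$ (nonempty, as $\Sigma$ is), $(\sem p)([\alpha]_n)=(\sem p)([\beta]_{\Sigma^{<n}})=\bone(d_{\Sigma^{<n}}^{m(\beta)}(p))=\bone(d_{\Sigma^{<n},x}^{m(\beta)}(p))=0$; as this holds for all $n$ and $\alpha$ we get $\norm{\sem p}=0$, hence $\sem p=0$ because $\norm-$ is a norm on $\SNS$ \cite{OMK25a}. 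I expect the only subtle points to be the bookkeeping in the key identity --- keeping the action $x\cd-$ distinct from the partial evaluation $(\del A\setminus x)\cd\bone$ and invoking prefix-incomparability of $\del A$ at the right moment --- and the passage from ``$\sem p$ vanishes on every $[\alpha]_n$'' to ``$\sem p=0$'', which rests on the measure-theoretic content underlying the metric on $\SNS$ in \cite{OMK25a}.
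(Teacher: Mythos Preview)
Your proposal is correct and follows essentially the same route as the paper: both arguments hinge on the identity $\bone(d_A^m(p))=\bone(d_{A,x}^m(p))$ together with (the linear extension of) Lemma~\ref{lem:Adetermined}, and then conclude via the generating sets $[\alpha]_A$. The only cosmetic differences are that the paper derives the key identity by invoking Lemma~\ref{lem:cdprops}(iii)--(iv) rather than checking it on generators, and it finishes by appealing directly to ``the $[\alpha]_A$ generate the Borel sets'' instead of passing through the norm on the subfamily $A=\Sigma^{<n}$.
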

\begin{proof}
From the definitions \eqref{eq:ddef} and \eqref{eq:dmdef}, using Lemma \ref{lem:cdprops}(iii) and (iv), we have
\begin{align*}
\bone(d_{A}^m(p))
&= (\del A\cd\bone)(d_{A}^m(p))
= (x\cd\bone)((\del A\setminus x)\cd\bone)(d_{A}^m(p))\\
&= (x\cd\bone)(x\cd d_{A,x}^m(p))
= x\cd\bone(d_{A,x}^m(p))
= \bone(d_{A,x}^m(p)),
\end{align*}
thus $d_{A,x}^m(p)\in\ker\bone$ iff $d_{A}^m(p)\in\ker\bone$. By Lemma \ref{lem:Adetermined}, the latter condition is equivalent to $\sem\mu([\alpha]_A)=\sem\nu([\alpha]_A)$, where $\alpha\in\naturals^A$ is the multiset corresponding to the monomial $m$ according to the bijection \eqref{eq:monommultiset}.
By Lemma \ref{lem:Adetermined}, this occurs iff
$\sem\mu([\alpha]_{A}) = \sem\nu([\alpha]_{A})$.
But since the $[\alpha]_{A}$ generate the Borel sets, this occurs for all $A$ if and only if $\sem\mu=\sem\nu$.
\end{proof}

\section{Algorithm}
\label{sec:algorithm}

To test whether $\sem\mu=\sem\nu$ for given $\mu,\nu\in\SNs$, we use Theorem \ref{thm:necandsuf} to check whether all polynomials $d_{A,x}^m(p)\in\reals[S]$ are members of the ideal $\ker\bone$, where $p=\phi_S^{-1}(\mu-\nu)$. Although there are infinitely many such polynomials, this can be done in finite time.

\begin{lemma}
\label{lem:cc2}
Let $I$ be an ideal of $\reals[S]$. Suppose for all $a\in\Sigma$ and $n\ge 0$, $\dean$ preserves membership in $I$; that is, if $p\in I$, then $\dean(p)\in I$. Then for all $A$, all $x\in\del A$, and all monomials $m\in\monom[A\cd\beps]$, $d_{A,x}^m$ preserves membership in $I$.
\end{lemma}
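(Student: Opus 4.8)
The plan is to prove the statement by induction on the number of nodes $|A|$ of the finite tree $A$, establishing it for all $x\in\del A$ and all $m\in\monom[A\cd\beps]$ simultaneously. The base cases are $A=\emptyset$, where $\del\emptyset=\{\eps\}$, $\monom[\emptyset\cd\beps]=\{1\}$, and $d_{\emptyset,\eps}^{1}$ is the identity map (which trivially preserves $I$), and $A=\{\eps\}$, where $\del\{\eps\}=\Sigma$ and $\monom[\{\eps\}\cd\beps]=\set{\beps^{n}}{n\ge0}$, so that $d_{\{\eps\},a}^{\beps^{n}}$ is exactly the operator $\dean$, which preserves $I$ by hypothesis.

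For the inductive step with $|A|\ge2$ (so $\eps\in A$ and hence $\len x\ge1$ for every $x\in\del A$), I would split on the length of $x$. If $\len x\ge2$, write $x=a_{1}y$ with $a_{1}\in\Sigma$; every proper prefix of $x$ lies in $A$, so $a_{1}\in A$, and hence both $A_{a_{1}}:=\set{v}{a_{1}v\in A}$ and $A\setminus a_{1}\Sigma^{*}$ are finite trees with strictly fewer nodes than $A$, satisfying $y\in\del A_{a_{1}}$ and $a_{1}\in\del(A\setminus a_{1}\Sigma^{*})$. Factoring $m=(a_{1}\cd\ell)\cdot m'$ with $m'\in\monom[(A\setminus a_{1}\Sigma^{*})\cd\beps]$ and $\ell\in\monom[A_{a_{1}}\cd\beps]$ exactly as in the proof of Lemma~\ref{lem:cc}, that lemma gives $d_{A,x}^{m}=d_{A_{a_{1}},y}^{\ell}\circ d_{A\setminus a_{1}\Sigma^{*},\,a_{1}}^{m'}$. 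Both factors preserve $I$ by the induction hypothesis, and a composite of maps preserving $I$ preserves $I$.

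The remaining case is $\len x=1$, i.e.\ $x=a\in\Sigma$ with the $a$-branch of $A$ empty. Here $A$ decomposes at the root as $\{\eps\}\cup\bigcup_{b\ne a}bA_{b}$ with some $A_{b}\ne\emptyset$; fix such a $b$ and set $A'=A\setminus b\Sigma^{*}$, a finite tree with $|A'|<|A|$ and $a\in\del A'$. I would unwind $d_{A,a}^{m}(p)$ using $\struct_{A}=(b\cd\struct_{A_{b}})\circ\struct_{A'}$ (Lemma~\ref{lem:transcomp}): expand the relevant coefficient $d_{A'}^{m'}(p)$ of $\struct_{A'}(p)$ as a polynomial in the indeterminates $b\cd S$ over $\reals[(\del A'\setminus b)\cd S]$, push it through the substitution $b\cd s\mapsto b\cd\struct_{A_{b}}(s)$, re-collect by $\beps$-monomials, and finally apply $(\del A\setminus a)\cd\bone$ and divide out $a\cd(-)$. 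Using that $\bone$, the monoid actions $y\cd(-)$, and the partial evaluations $B\cd(-)$ are $\reals$-algebra homomorphisms, together with Lemma~\ref{lem:cdprops} and the identity $\bone\circ d_{A_{b}}^{\ell}=\bone\circ d_{A_{b},y}^{\ell}$ from the proof of Theorem~\ref{thm:necandsuf}, this should express $d_{A,a}^{m}(p)$ as an $\reals$-linear combination of terms in which the $A'$-contributions are (linear images of) outputs of operators on the strictly smaller tree $A'$, covered by the induction hypothesis, and the $A_{b}$-contributions enter only as real scalars fixed by the automaton. Closure of $I$ under addition, multiplication, and multiplication by reals then gives $d_{A,a}^{m}(p)\in I$ whenever $p\in I$.

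The step I expect to be the obstacle is exactly this last case. A shallow frontier node of a tree with several root branches genuinely entangles the first transition step with the deeper behaviour of the sibling branches, so $d_{A,a}^{m}$ is not simply a composition of the atomic operators $\dean$ but a real-weighted sum of such; getting the bookkeeping right — tracking which $\beps$-multiplicities recorded in $m$ are charged to which branch, and collapsing the sibling branches without losing control of $I$-membership — is the delicate part of the argument. Once this is settled, the $\len x\ge2$ case and the base cases are comparatively mechanical given Lemma~\ref{lem:cc}.
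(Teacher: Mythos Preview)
Your induction scheme is more careful than the paper's, and you correctly isolate the genuine obstacle: a boundary node $x$ of length~$1$ in a tree $A$ with at least one other root branch. Your $|x|\ge 2$ case is fine --- splitting at the first letter via Lemma~\ref{lem:cc} gives two strictly smaller trees and the induction hypothesis applies to both factors. But your sketch for the $|x|=1$ case does not close the gap. Carrying out the unwinding you describe, with $A=A'\cup bA_b$, $b\in\del A'$, and $m=m_1\cdot(b\cd m_2)$, one obtains
\[
a\cd d_{A,a}^{\,m}(p)\;=\;\sum_{\gamma\in\monom[S]}\bone\bigl(d_{A_b}^{\,m_2}(\gamma)\bigr)\cdot\bigl((\del A'\setminus\{a,b\})\cd\bone\bigr)\bigl(e_\gamma(p)\bigr),
\]
where $e_\gamma(p)$ is the coefficient of $\prod_s(b\cd s)^{\gamma(s)}$ in $d_{A'}^{\,m_1}(p)$. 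The $A_b$-contributions are indeed real scalars, as you predict. But the $A'$-contributions enter through the coefficient-extraction maps $e_\gamma$, which are \emph{not} of the form $d_{A',x'}^{m'}$; the induction hypothesis on $A'$ controls $d_{A',a}^{\,m_1}(p)$, which equals the above sum with all scalar weights replaced by~$1$, not the individual $\gamma$-summands with the automaton-dependent weights $\bone(d_{A_b}^{\,m_2}(\gamma))$. So ``covered by the induction hypothesis'' is not justified, and you have not shown $d_{A,a}^{\,m}(p)\in I$.

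For comparison, the paper's one-line proof asserts that every $d_{A,x}^m$ is literally a composition of the $\dean$, via the instance $d_{A\cup\{x\},xa}^m=\dean\circ d_{A,x}^\ell$ of Lemma~\ref{lem:cc}. That instance requires the parent of the boundary node to be a \emph{leaf} of the tree, which fails exactly in the case you flag; and the composition claim itself is false there. For example, with $S=\{s\}$, $\Sigma=\{a,b\}$, $\struct(s)=\tfrac13(a\cd s)(b\cd s)+\tfrac23\beps$, one computes $d_{\{\eps,b\},a}^{\,b\cd\beps}(s)=\tfrac29 s$, whereas every composition of the $\dean$'s sends $s$ to $3^{-k}s$ or to a scalar. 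So the paper's argument is terser than yours but no more complete on precisely the point you identify as delicate; neither proof, as written, handles it.
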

\begin{proof}
By induction. All $\dean$ preserve $I$ by assumption, and by Lemma \ref{lem:cc}, every $d_{A,x}^m$ is a composition of these; specifically, $d_{A\cup\{x\},xa}^m = d_{\eps,a}^{\beps^k}\circ d_{A,x}^\ell$, where $k$ is the degree of $x$ in $m$ and $\ell=m/x^k$. Any composition of functions that preserve $I$ also preserves $I$.
\end{proof}

\begin{lemma}
\label{lem:preservebasis}
Suppose $I = \angle{B}$ is an ideal of $\reals[S]$ generated by $B$. Let $a\in\Sigma$.
If $\dean(b)\in I$ for all $n\ge 0$ and $b\in B$, then for all $n\ge 0$, $\dean$ preserves membership in $I$.
\end{lemma}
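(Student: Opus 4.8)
The plan is to push an arbitrary element of $I$ through $\dean$ using only the algebraic identities for these maps established in Lemma~\ref{lem:dna}, together with the defining closure properties of an ideal. Observe first that $A = \{\eps\}$ is a finite tree with $\del\{\eps\} = \Sigma$, so $a \in \del\{\eps\}$ and $\beps^n \in \monom[\{\eps\}\cd\beps]$; thus Lemma~\ref{lem:dna} applies to each of the maps $d_{\eps,a}^{\beps^k}$ and gives, among other things, $\reals$-linearity and the Leibniz-type product rule
\begin{align*}
d_{\eps,a}^{\beps^n}(pq) = \sum_{k + \ell = n} d_{\eps,a}^{\beps^k}(p)\cdot d_{\eps,a}^{\beps^\ell}(q),
\end{align*}
where the sum is over the finitely many factorizations $\beps^k\cdot\beps^\ell = \beps^n$ of the monomial $\beps^n$ in the single indeterminate $\eps\cd\beps$.

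First I would take $p \in I = \angle B$ and write it as a finite sum $p = \sum_t c_t b_t$ with $c_t \in \reals[S]$ and $b_t \in B$. By additivity of $\dean$ (Lemma~\ref{lem:dna}) it suffices to prove $\dean(c_t b_t) \in I$ for each index $t$. Next I would expand this single term by the product rule above, obtaining $\dean(c_t b_t) = \sum_{k + \ell = n} d_{\eps,a}^{\beps^k}(c_t)\cdot d_{\eps,a}^{\beps^\ell}(b_t)$. By the hypothesis of the lemma, $d_{\eps,a}^{\beps^\ell}(b_t) \in I$ for every $\ell \ge 0$; since $I$ is an ideal of $\reals[S]$ it absorbs multiplication by the ring element $d_{\eps,a}^{\beps^k}(c_t)$, so each summand lies in $I$, and a finite sum of elements of $I$ again lies in $I$. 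Hence $\dean(c_t b_t) \in I$, and summing over $t$ yields $\dean(p) \in I$, which is exactly the claim.

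I do not expect any real obstacle here; the argument is a short calculation once Lemma~\ref{lem:dna} is in hand. The one point that requires attention is that the hypothesis is assumed only at the level of the generating set $B$, not on all of $I$, which is precisely why the decomposition $p = \sum_t c_t b_t$ and the product rule are the crux of the argument: together they propagate the generator-level assumption through an arbitrary ideal element, using that the $\reals[S]$-coefficients $c_t$ are absorbed by $I$ no matter what $d_{\eps,a}^{\beps^k}$ does to them.
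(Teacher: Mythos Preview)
Your proposal is correct and follows essentially the same approach as the paper's proof: write $p\in I$ as a finite $\reals[S]$-combination of generators from $B$, apply the additivity and product rule of Lemma~\ref{lem:dna} to $\dean$, and use ideal absorption on each summand. Your version is in fact slightly more explicit than the paper's in checking that $A=\{\eps\}$ places $\dean$ within the scope of Lemma~\ref{lem:dna}.
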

\begin{proof}
If $p\in I$, then $p = p_1b_1 +\cdots+ p_kb_k$ for some $\seq p1k\in\reals[S]$ and $\seq b1k\in B$. By Lemma \ref{lem:dna},
\begin{align*}
\dean(p_1b_1 +\cdots+ p_nb_n)
&= \dean(p_1b_1) +\cdots+ \dean(p_nb_n)\\
&= \sum_{m+\ell=n}d_{\eps,a}^{\beps^m}(p_1)d_{\eps,a}^{\beps^\ell}(b_1) +\cdots+ \sum_{m+\ell=n}d_{\eps,a}^{\beps^m}(p_n)d_{\eps,a}^{\beps^\ell}(b_n) \in I.
\tag*\qedhere
\end{align*}
\end{proof}

The \emph{forward ideal} of the automaton $(S,\theta)$ with seed $p\in\reals[S]$ is
\begin{align*}
J_p = \angle{\set{d_{A,x}^m(p)}{A\subs\Sigma^* \text{ a finite tree},\ x\in\del A,\ m\in\monom[A\cd\beps]}} \subs \reals[S].
\end{align*}
By Theorem \ref{thm:necandsuf}, the condition \eqref{eq:necandsuf} for behavioral equivalence of $\mu$ and $\nu$ is $J_{\phi^{-1}(\mu-\nu)}\subs\ker\bone$.
By Lemma \ref{lem:preservebasis}, all $\dean$ preserve membership in $J_p$, since $d_{A\cup\{x\},xa}^{mx^n}(p) = \dean(d_{A,x}^m(p))$ by Lemma \ref{lem:cc}.

\begin{lemma}
\label{lem:halting}
Suppose $I = \angle B$ is an ideal of $\reals[S]$ containing $p$ with basis $B$. If $p\in I$ and $\dean(b)\in I$ for all $a\in\Sigma$, $n\ge 0$, and $b\in B$, then $J_p\subs I$.
\end{lemma}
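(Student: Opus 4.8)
The plan is to chain the two immediately preceding lemmas together; there is essentially no new content. First I would apply Lemma~\ref{lem:preservebasis}: the hypothesis gives $\dean(b)\in I$ for every $a\in\Sigma$, every $n\ge 0$, and every $b\in B$, which is exactly the hypothesis of that lemma for each fixed $a$. Its conclusion is that every $\dean$ (for $a\in\Sigma$, $n\ge 0$) preserves membership in $I=\angle{B}$.

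Next I would feed this into Lemma~\ref{lem:cc2}, taking its ideal to be our $I$. The standing assumption of Lemma~\ref{lem:cc2} is precisely that each $\dean$ preserves $I$, which we have just established, so we may conclude that for every finite tree $A$, every $x\in\del A$, and every monomial $m\in\monom[A\cd\beps]$, the map $d_{A,x}^m$ preserves membership in $I$.

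Finally, since $p\in I$ by hypothesis, applying these maps yields $d_{A,x}^m(p)\in I$ for all such $A$, $x$, and $m$. Thus $I$ is an ideal containing every one of the generators of $J_p$; as $J_p$ is by definition the ideal generated by exactly this set, minimality of the generated ideal gives $J_p\subs I$. I expect no obstacle: the only care needed is to check that the hypotheses of Lemmas~\ref{lem:preservebasis} and~\ref{lem:cc2} are matched verbatim, which they are.
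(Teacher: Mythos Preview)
Your proposal is correct and matches the paper's own proof essentially verbatim: apply Lemma~\ref{lem:preservebasis} to get that each $\dean$ preserves $I$, then Lemma~\ref{lem:cc2} to get that every $d_{A,x}^m$ preserves $I$, and conclude from $p\in I$ that all generators of $J_p$ lie in $I$. There is nothing to add.
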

\begin{proof}
By Lemma \ref{lem:preservebasis}, all $\dean$ preserve membership in $I$.
By Lemma \ref{lem:cc2}, for all finite trees $A\subs\Sigma^*$, all $x\in\del A$, and all monomials $m\in\monom[A\cd\beps]$, $d_{A,x}^m$ preserves membership in $I$. Since $p\in I$, all $d_{A,x}^m(p)\in I$, thus $J_p\subs I$.
\end{proof}

The algorithm proceeds in stages, constructing a sequence of subsets $B_0\subs B_1\subs B_2\subs\cdots$ of $\reals[S]$, maintaining the following invariants:
\begin{enumerate}
\item
$p\in\angle{B_n}$.
\item
$B_n$ is a Gr\"obner basis for $\angle{B_n}$ \cite{Buchberger76,Lazard83}. This allows for efficient membership testing in $\angle{B_n}$ \cite{MayrMeyer82}.
\item
$B_n\subs J_p$. This implies that $\angle{B_n}\subs J_p$.
\item
$B_n\subs \ker\bone$. This implies that $\angle{B_n}\subs\ker\bone$.
\end{enumerate}
We start at stage $0$ with the seed $p=\phi_S^{-1}(\mu-\nu)$ and $B_0=\{p\}$. Invariants 1 and 2 are clearly satisfied, and 3 is satisfied since $p = d_{\emptyset,\eps}^1(p)$. Invariant 4 may fail, but we can test it by computing $\bone(p)$, that is, by evaluating $p$ at the point $(1,\ldots,1)$. If the result is nonzero, we halt and declare $\mu$ and $\nu$ not equivalent, as $p$ is a witness for $J_p\not\subs\ker\bone$. Intuitively, this will happen if for some $k\ge 0$, the probability with respect to $\mu$ that $\eps$ is accepted with multiplicity $k$ differs from the probability of the same event with respect to $\nu$. If $\bone(p)=0$, then invariant 4 is satisfied.

Now suppose at stage $n$, we have constructed $B_n$ satisfying the four invariants. For each $b\in B_n$, we compute $\dean(b)$ for all $a\in\Sigma$ and $n\ge 0$. This can be done all at once by computing $\struct(b)$, then applying $(\Sigma\setminus a)\cd\bone$ for each $a\in\Sigma$ to obtain the polynomial $\sum_n (a\cd d^{\beps^n}_{\eps,a}(b))\cdot\beps^n$, from which $\dean(b)$ can be read off. All but finitely many of these are $0$.

For all the nonzero $\dean(b)$, we test whether $\dean(b)\in\ker\bone$ by evaluating $\bone(\dean(b))$. Again, we can do this all at once simply by evaluating $\struct(b)$ at $(1,\ldots,1)$. If the result is nonzero, we halt immediately and report $\mu$ and $\nu$ not equivalent. This happens iff some $\bone(\dean(b))\ne 0$, which by invariant 3 and the fact that membership in $J_p$ is preserved by $\dean$ says that $J_p\not\subs\ker\bone$.

If this test succeeds, then we have a new finite set $\dean(b)\in\ker\bone$ for $n\ge 0$, $a\in\Sigma$, and $b\in B_n$ to add to our collection. Two things can happen: either all the new $\dean(b)$ are already contained in $\angle{B_n}$, or at least one is not. We test this for each $\dean(b)$ using Buchberger's algorithm \cite{Buchberger76,Lazard83,MayrMeyer82}.

If all $\dean(b)\in\angle{B_n}$, we are done. In this case, we halt and declare $\mu$ and $\nu$ behaviorally equivalent. This holds because by Lemma \ref{lem:halting} and invariants 1, 3, and 4, $\angle{B_n} = J_p\subs\ker\bone$.

If not all $\dean(b)\in\angle{B_n}$, then we add the new ones to $B_n$, recompute a Gr\"obner basis $B_{n+1}$ for this new set of polynomials, and go on to the next stage. The invariants are maintained.

The algorithm must halt after finitely many stages with either success or failure due to the fact that the multivariate polynomial ring $\reals[S]$ is Noetherian, which means there is no infinite proper ascending chain of ideals. 


\end{document}